\tikzset{
    block filldraw1/.style={% only the fill and draw styles
        draw, fill=light-gray-in-algo},
    block filldraw2/.style={% only the fill and draw styles
        draw, fill=ddarkgray},
    % block rect/.style={% fill, draw + rectangle (without measurements)
    %     block filldraw, rectangle},
    % block/.style={% fill, draw, rectangle + minimum measurements
    %     block rect, minimum height=0.8cm, minimum width=6em},
    % from/.style args={#1 to #2}{% without transformations
    %     above right={0cm of #1},% needs positioning library
    %     /utils/exec=\pgfpointdiff
    %         {\tikz@scan@one@point\pgfutil@firstofone(#1)\relax}
    %         {\tikz@scan@one@point\pgfutil@firstofone(#2)\relax},
    %     minimum width/.expanded=\the\pgf@x,
    %     minimum height/.expanded=\the\pgf@y}
        }
\def\arcr{\@arraycr}
\definecolor{shadecolor}{gray}{1.00}
\definecolor{ddarkgray}{gray}{0.4}
\definecolor{darkgray}{gray}{0.7}
\definecolor{light-gray-in-algo}{gray}{0.8}
\definecolor{light-gray}{gray}{0.9}
\definecolor[named]{ACMBlue}{cmyk}{1,0.1,0,0.1}
\definecolor[named]{ACMYellow}{cmyk}{0,0.16,1,0}
\definecolor[named]{ACMOrange}{cmyk}{0,0.42,1,0.01}
\definecolor[named]{ACMRed}{cmyk}{0,0.90,0.86,0}
\definecolor[named]{ACMLightBlue}{cmyk}{0.49,0.01,0,0}
\definecolor[named]{ACMGreen}{cmyk}{0.20,0,1,0.19}
\definecolor[named]{ACMPurple}{cmyk}{0.55,1,0,0.15}
\definecolor[named]{ACMDarkBlue}{cmyk}{1,0.58,0,0.21}
\definecolor[named]{PaleGreen}{RGB}{196, 255, 231}
\definecolor[named]{PaleOrange}{RGB}{255, 213, 169}
\definecolor{intnull}{RGB}{213,229,255}
\newcommand{\graybox}[1]{\colorbox{light-gray}{#1}}
\newcommand{\darkgraybox}[1]{\colorbox{darkgray}{#1}}
\newcommand{\etc}{\emph{etc}\xspace}
\newcommand{\ie}{\emph{i.e.}\xspace}
\newcommand{\eg}{\emph{e.g.}\xspace}
\newcommand{\Eg}{\emph{E.g.}\xspace}
\newcommand{\aka}{\textit{a.k.a.}\xspace}
\newcommand{\cf}{\textit{cf.}\xspace}
\newcommand{\wrt}{\emph{w.r.t.}\xspace}
\newcommand{\Iff}{\emph{iff}\xspace}
\newcommand{\tname}[1]{\textsf{#1}\xspace}
\newcommand{\popper}{\tname{Popper}}
\newcommand{\aspal}{\tname{ASPAL}}
\newcommand{\aspsyn}{\tname{ASPSynth}}
\newcommand{\shape}{\tname{ShaPE}}
\newcommand{\vcdryad}{\tname{VCDryad}}
\newcommand{\grass}{\tname{GRASShopper}}
\newcommand{\veri}{\tname{VeriFast}}
\newcommand{\precis}{\tname{Precis}}
\newcommand{\prolog}{\tname{Prolog}}
\newcommand{\clingo}{\tname{Clingo}}
\newcommand{\suslik}{\tname{SuSLik}}
\newcommand{\sling}{\tname{SLING}}
\newcommand{\spt}{\tname{SPT}}
\newcommand{\synbad}{\tname{Synbad}}
\newcommand{\tool}{\tname{Sippy}}
\newcommand{\ggen}{\tname{Grippy}}
\definecolor{pblue}{rgb}{0.13,0.13,1}
\definecolor{pgreen}{rgb}{0,0.5,0}
\definecolor{pred}{rgb}{0.9,0,0}
\definecolor{pgrey}{rgb}{0.46,0.45,0.48}
\definecolor{ckeyword}{HTML}{7F0055}
\definecolor{ccomment}{HTML}{3F7F5F}
\definecolor{cnumber}{HTML}{2A0099}
\lstdefinelanguage{SynLang}{
  keywords={new, let, if, else, null, return, while},
  ndkeywords={bool, int, void, loc, set, pred, where},
  mathescape=true,
  showspaces=false,
  escapechar=$,
  %$
  showtabs=false,
  breaklines=true,
  showstringspaces=false,
  breakatwhitespace=true,
  lineskip=-0.9pt,
  morecomment=[l]{//}, % l is for line comment
  morecomment=[s]{/*}{*/}, % s is for start and end delimiter
  basewidth={0.54em, 0.4em},%
  basicstyle=\footnotesize\ttfamily,
  keywordstyle={\color{ACMPurple}\ttfamily\bfseries},
  ndkeywordstyle={\color{pblue}\ttfamily\bfseries},
  commentstyle={\color{ccomment}\itshape},
  numbers=none,
  moredelim=**[is][\color{red}]{@}{@},
}
\lstdefinestyle{numbers}
{
  numbers=left,
  numberstyle=\scriptsize\sf,
  xleftmargin=15pt
}
\newcommand{\set}[1]{\left\{{#1}\right\}}
\let\origthelstnumber\thelstnumber
\newcommand*\Suppressnumber{%
  \lst@AddToHook{OnNewLine}{%
    \let\thelstnumber\relax%
     \advance\c@lstnumber-\@ne\relax%
    }%
}
\newcommand*\Reactivatenumber{%
  \lst@AddToHook{OnNewLine}{%
   \let\thelstnumber\origthelstnumber%
   \advance\c@lstnumber\@ne\relax}%
}
\newtcbox{\tracebox}[1][]{on line,size=fbox,#1}
\newcommand{\mynext}{\ensuremath{\mathsf{next}}}
\newcommand{\myvalue}{\ensuremath{\mathsf{value}}}
\newcommand{\code}[1]{\lstinline[basicstyle=\small\ttfamily,mathescape=true,escapechar=$,]{#1}}
\newcommand{\codeinmath}[1]{\text{\small\ensuremath{\mathtt{#1}}}}
\newcommand{\pcode}[1]{\mintinline[fontsize=\small]{prolog}{#1}}
\newcommand{\sym}[1]{\langle \text{#1} \rangle}
\newcommand{\pre}[1]{\lceil \text{#1} \rfloor}
\newtheorem{theorem}{Theorem}[section]
\newtheorem{definition}{Definition}[section]
\setlist[itemize]{leftmargin=*}
\setlist[enumerate]{leftmargin=*}
\begin{document}

%%
%% The "title" command has an optional parameter,
%% allowing the author to define a "short title" to be used in page headers.
\title{Inductive Synthesis of Inductive Heap Predicates}

\author{Ziyi Yang}
\affiliation{%
  \institution{National University of Singapore}
    \country{Singapore}
}
\email{yangziyi@u.nus.edu}
\orcid{0000-0002-8015-7846}

\author{Ilya Sergey}
\affiliation{%
  \institution{National University of Singapore}
    \country{Singapore}
}
\email{ilya@nus.edu.sg}
\orcid{0000-0003-4250-5392}

% \vspace{-30pt}
\begin{abstract}
  We present an approach to automatically synthesise recursive
predicates in Separation Logic (SL) from concrete data structure
instances using Inductive Logic Programming (ILP) techniques.
The main challenges to make such synthesis effective are (1)~making it
work without negative examples that are required in ILP but are
difficult to construct for heap-based structures in an automated
fashion, and (2)~to be capable of summarising not just the
\emph{shape} of a heap (\eg, it is a \emph{linked} list), but also the
\emph{properties} of the data it stores (\eg, it is a \emph{sorted}
linked list).
We tackle these challenges with a new predicate learning algorithm.
The key contributions of our work are (a)~the formulation of ILP-based
learning only using positive examples and (b)~an algorithm that
synthesises property-rich SL predicates from concrete \emph{memory
graphs} based on the positive-only learning.

We show that our framework can efficiently and correctly synthesise SL
predicates for structures that were beyond the reach of the
state-of-the-art tools, including those featuring non-trivial payload
constraints (\eg,~binary search trees) and nested recursion
(\eg,~$n$-ary trees).
We further extend the usability of our approach by a memory graph
generator that produces positive heap examples from programs. Finally,
we show how our approach facilitates deductive verification and
synthesis of correct-by-construction code.

\end{abstract}

\begin{CCSXML}
<ccs2012>
   <concept>
       <concept_id>10003752.10010124.10010138.10010140</concept_id>
       <concept_desc>Theory of computation~Program specifications</concept_desc>
       <concept_significance>500</concept_significance>
       </concept>
   <concept>
       <concept_id>10010147.10010178.10010187.10010196</concept_id>
       <concept_desc>Computing methodologies~Logic programming and answer set programming</concept_desc>
       <concept_significance>500</concept_significance>
       </concept>
 </ccs2012>
\end{CCSXML}

\ccsdesc[500]{Theory of computation~Program specifications}
\ccsdesc[500]{Computing methodologies~Logic programming and answer set programming}

\keywords{inductive program synthesis, answer set programming,
  Separation Logic}

\maketitle
%%
%% This command processes the author and affiliation and title
%% information and builds the first part of the formatted document.

% \vspace{-5pt}

\section{Introduction}
\label{sec:intro}

Separation Logic (SL) is a popular Hoare-style formalism for
specifying and verifying imperative programs that manipulate mutable
pointer-based data
structures~\cite{reynolds2002separation,OHearn-al:CSL01}. 
SL has been successfully applied to a wide range of applications,
including program verification~\cite{Appel-al:BOOK14,Jacobs-al:NFM11}, static
analysis~\cite{Calcagno-Distefano:NFM11}, bug
detection~\cite{LeRVBDO22}, invariant inference~\cite{le2019sling,DBLP:phd/basesearch/Dohrau22},
program synthesis~\cite{WatanabeGPPS21,polikarpova2019structuring}, and
repair~\cite{Tonder-LeGoues:ICSE18,NguyenTSC21}.
The key to the practical success of SL is its ability to enable
\emph{compositional} reasoning about programs in the presence of
potential pointer aliasing by exploiting the \emph{locality} of common
heap-manipulating operations. 
%
% one can ascribe a declarative specification to a program (\ie, a
% pre- and a postcondition) stated only in terms of its
% \emph{footprint}---the portion of the heap that is required for the
% program to execute safely and without crashes.
% 
% In practice, this means that SL-based reasoning is \emph{modular}: a
% unit of code (\eg, a function) can be verified once and for all \wrt a
% suitable specification and never revisited again, making the SL-based
% verification \emph{scalable} for large codebases.
% %
% Thanks to its scalability and modularity, the last two decades have
% seen proliferation of SL-based tools for program verification (both
% interactive~\cite{Appel-al:BOOK14} and
% automated~\cite{Berdine-al:APLAS05,Jacobs-al:NFM11}), static
% analysis~\cite{Calcagno-Distefano:NFM11}, bug
% detection~\cite{LeRVBDO22}, invariant inference~\cite{le2019sling},
% program synthesis~\cite{WatanabeGPPS21,polikarpova2019structuring} and
% repair~\cite{Tonder-LeGoues:ICSE18,NguyenTSC21}.
To enable expressive specifications, Separation Logic offers a
powerful mechanism to declaratively describe the shape and data
properties of linked heap-based structures, such as lists and trees:
\emph{inductive heap predicates}.
%
% One of the trade-offs of SL-style reasoning is, rtherefore, the need
% to define those predicates so they could be employed in program
% specifications taken as inputs by the verifiers or synthesisers.
%
% As is well-known, the guarantees delivered by formal verification are
% only as good as a specification ascribed to the code, hence one might
% desire those predicates to be as precise as possible, so they would
% capture all important properties of the data structures, required to
% verify programs that manipulate with them.
%
Unsurprisingly, defining precise and useful inductive predicates for
non-trivial data structures in general requires a good grasp of the
structure's \emph{internal invariants}.
Most existing SL-based reasoning frameworks require
defining predicates \emph{manually};
a few come with a set of pre-defined predicates for the most commonly
used data structures~\cite{Calcagno-al:JACM11,le2019sling},---thus, limiting the
ability of those approaches to verify or generally utilise \emph{data-specific} program
properties, such as, \eg, correctness of searching
an element in a binary search tree.

% What makes things worse is that the diversity of SL dialects hinder reusing the predicates defined for one tool in another.

% So as the current SL-based tools are not easy to be approached by
% non-experts, 

The aim of this work is to offer a methodology for \emph{automatically
  synthesising} inductive predicates for linked structures, where not
only the \emph{shape} but also the \emph{properties} of the stored
data (\eg, a binary tree being \emph{balanced}) would be captured.
To achieve this goal, we develop an approach for inferring inductive
SL predicates by synthesising them from \emph{memory graphs}, \ie,
concrete examples of data structure memory layouts, as produced by
programs that generate them.
Our work is closely related to two research themes: (1)~synthesising
formal representations of data
structures~\cite{guo2007shape,zhu2016automatically,LPAR23:Learning_Data_Structure_Shapes,molina2021evospex},
and (2)~using machine learning to infer data structure
invariants~\cite{brockschmidt2017learning,molina2019training,DBLP:conf/spin/UsmanWWYDK19}.
Existing approachers either impose specific restrictions on the inputs
of the synthesiser by, \eg, requiring functions constructing the data
structure \cite{zhu2016automatically,molina2021evospex} or a large
number of both positive \emph{and} negative
examples~\cite{molina2019training,DBLP:conf/spin/UsmanWWYDK19}; or
produce weaker specification, \eg, only inferring the structure shape,
but not its
properties~\cite{guo2007shape,LPAR23:Learning_Data_Structure_Shapes}.

To deliver an effective solution to this problem, our key idea is to
consider inductive heap predicates as \emph{logic programs} in a
\prolog-style language, and concrete memory graphs as \emph{logic
  facts}.
This perspective allows us to cast predicate synthesis as a classic
instance of Inductive Logic Programming~(ILP)---synthesising a logic
program by generalising concrete examples and facts about concrete
data instances that are also defined as logic
programs~\cite{Muggleton91,CropperDEM22}.
That said, to harness the power of ILP for synthesising SL predicates,
we have to overcome the following two challenges:

\begin{enumerate}[label=\textbf{C\arabic*},topsep=2pt,leftmargin=17pt]
\item \label{c1} For \emph{effectively} learning logic programs, ILP
  requires both \emph{positive} and \emph{negative} examples; the
  representative (\ie, non-trivial) negative examples are essential to
  ILP (\cf~\autoref{sec:popper}) but difficult to acquire without a
  human in the loop (\cf \autoref{sec:generator} for a discussion).
\item \label{c2} Synthesis of predicates in Separation Logic with
  arbitrary data constraints features a large search space, making it
  difficult to be \emph{efficient}.
\end{enumerate}

\noindent
To address the challenge \ref{c1}, we propose a novel
\textit{positive-only learning} approach to infer the \emph{most
  specific} logic predicate from a set of positive examples by
incorporating as many of the available (yet non-redundant) pre-defined
constraints as possible.
This is achieved by (1)~eliminating logically \emph{redundant}
restrictions featured in generated predicate candidates (as, \eg, the
last one in the series \pcode{A < B, B < C, A < C}) and
(2)~by introducing the notion of \emph{specificity} that selects a
locally-optimal inductive SL predicate from a set
of candidates with no redundancies.

Having phrased the inductive heap predicate synthesis as a search
for a local optimum, we inevitably face its large
computational complexity, which brings us to the challenge~\ref{c2}:
\emph{efficiency} of the search.
We address this challenge by exploiting the nature of our target
domain, \ie, Separation Logic. The key insight that allows us to prune
many non-viable candidates is to perform early detection of
\emph{invalid} combinations of heap constraints, \eg, those implying
that the same symbolic heap location can be \code{null} and not
\code{null} at the same time.
Combined, our solutions to the challenges~\ref{c1} and~\ref{c2}
deliver an approach for effective and efficient inductive synthesis of
inductive heap predicates from concrete memory graphs.

In summary, this work makes the following contributions:
%

% \vspace{-5pt}
%
\begin{itemize}
\item The first inductive synthesis approach of inductive heap
  predicates with arbitrary data constraints, requiring only positive
  examples of memory graphs, achieved by (1)~positive-only learning
  for ILP, (2)~exploiting the domain-specific properties of SL.
\item \tool---an automated tool for synthesising SL predicates from
  memory graphs, showcasing the \emph{effectiveness} of our approach
  for synthesising non-trivial predicates in a series of benchmarks.
\item \ggen---a memory graph generator that can automatically produce
  positive examples for the predicate synthesis via \tool, given
  data-manipulating programs, which it uses as test oracles.
\item Demonstration of utility of \tool by (a)~learning the predicates
  for verification from real-world heap-manipulating programs by
  obtaining memory graphs automatically via \ggen, and (b)~using the
  synthesised predicates for automated deductive synthesis.
\end{itemize}

\section{Overview and Key Ideas}
\label{sec:overview}

In this section, we provide a brief outline of the basics of
Separation Logic (SL) and its inductive predicates. Next, we explain
how to use the existing ILP system \popper~\cite{cropper2021learning}
to synthesise such predicates from \emph{both positive and negative}
examples, and how we handle with learning from positive examples only.
We conclude with the high-level workflow of our SL predicate
synthesiser \tool.

\subsection{Inductive Predicates in Separation Logic}
\label{sec:sl}

\begin{figure}
  \centering
%  \belowcaptionskip=-10pt
%  \abovecaptionskip=5pt
  \includegraphics[width=0.8\textwidth]{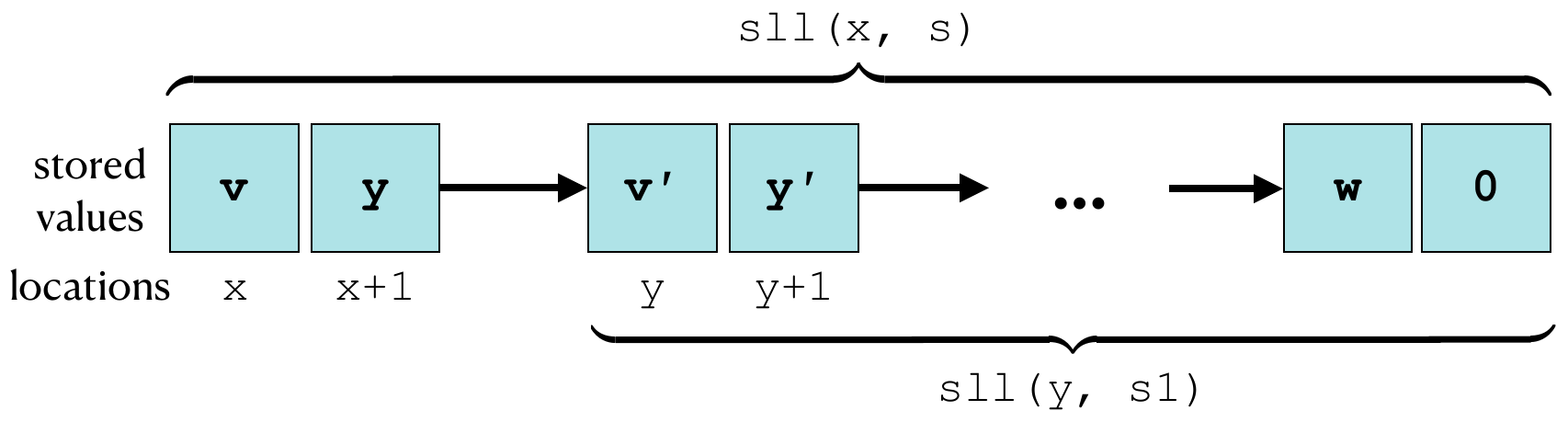}

  \caption{Memory layout of a singly linked list.}
  \label{fig:sll}
\end{figure}
Consider a schematic memory layout depicted in \autoref{fig:sll}
corresponding to a singly linked list (SLL).
The list has a recurring structure with each of its elements
represented by a consecutive pair of memory locations (the ``head''
one referred to by a pointer variable~\code{x}), the first one storing
its data value (or \emph{payload})~\code{v} and the second containing
the address \code{y} of the tail of the list. 
Knowing these shape constraints, the entire list can be traversed
recursively by starting from the head and following the tail pointers.
% thus, getting access to its payload.

The idea of defining the repetitive shape of a heap-based linked
structure, such as SLL, is precisely captured by Separation Logic and
its inductive (\ie, well-founded recursive) predicates. One encoding
of an SLL heap shape via the SL predicate \code{sll} is given below:

% The idea of defining the repetitive shape of heap-based linked
% structures
% %, such as SLL,  %for page break reason
% is captured by Separation Logic and
% its inductive (\ie, well-founded recursive) predicates. One
% encoding
% of an SLL heap shape   is given via the SL predicate:

\begin{lstlisting}
  pred sll(loc x, set s) where
    | x = 0 $\Rightarrow$ { s = {}; emp }
    | x $\neq$ 0 $\Rightarrow$ { s = {v} $\cup$ s1; x $\mapsto$ v * (x+1) $\mapsto$ y * sll(y, s1) }
\end{lstlisting}

\noindent
The predicate \code{sll} is parameterised by a location (\ie, pointer variable)
\code{x} and a payload set of the data structure
\code{s};
%
% \is{We could use algebraic lists to encode payloads instead of sets.
%   We should either address it or explain our choice.}
%
it holds true for any
\emph{heap fragment} that follows the shape of a linked list (and
contains no extra heap space).
What exactly that shape is, is defined by the two \emph{clauses} (\aka
constructors) of the predicate. 
The first one handles the case when \code{x} is a null-pointer,
constraining the payload set \code{s} of the list to be empty
(\code{\{\}}); the same holds for the list-carrying heap---which is
denoted by a standard SL assertion \code{emp}.\footnote{For
  simplicity, our examples use mathematical sets to encode the data
  payload, assuming uniqueness of the elements, instead of, \eg, an
  algebraic list. This is not a conceptual or practical limitation of
  our approach, as we will show in \autoref{sec:done}.}
The second clause describes a more interesting case, in which \code{x}
is not null, and so the payload can be split to an element \code{v}
and the residual payload \code{s1} (for simplicity of this example, we
assume that all elements of the list are unique).
Furthermore, the heap carrying the list is specified to have two
consecutive locations, \code{x} and \code{x + 1}, storing \code{v}
and some (existentially quantified) pointer value \code{y}, as denoted
by the SL \emph{points-to} notation $\mapsto$.
Finally, the rest of the SLL-carrying heap is 
the recursive occurrence of the same predicate \code{sll(y, s1)} (with
different arguments), thus replicating the recursive structure of the
layout from \autoref{fig:sll}.
%
% The heap-related constraints appearing after the semicolon in the
% clauses are traditionally referred to as \emph{spatial},
% while the remaining constraints (such as, \eg, the statement \code{s
%   == \{\}} constraining the empty list payload) are usually called
% \emph{pure}.

The logical connective \code{*} appearing in the second clause of the
\code{sll} predicate is known as the \emph{separating conjunction}
(sometimes pronounced ``and separately'') and is the main enabling
feature of Separation Logic~\cite{OHearn-al:CSL01}.
It implicitly constrains the symbolic heaps it connects in a spatial
assertion to have \emph{disjoint} domains.
Specifically, in this example it implies that the heap fragment
captured by \code{sll(y, s1)} does not contain memory locations
referred to by either \code{x} or \code{x+1}.
Such disjointness constraint  is what makes it
possible to avoid extensive reasoning about aliasing when using SL
specifications, making them \emph{modular}, \ie, holding true in the
context of any heap that is larger than what is affected by the
specified program.
%

% \subsection{\popper an ASP-based ILP system}

% \pagebreak

% \vspace{-5pt}
  
\subsection{From  Memory Graphs to Heap Predicates}
% \subsection{Synthesising Heap Predicates via ILP} maybe?
\label{sec:popper}

Our goal is to synthesise inductive SL predicates from examples of
concrete memory graphs.
To do so, we phrase both SL predicates and the memory graphs that
satisfy them in terms of Logic Programming. For example, the \prolog
predicate below defines a sorted singly linked list:
\begin{minted}[fontsize=\small]{prolog}
  srtl(X, S) :- empty(S), nullptr(X).
  srtl(X, S) :- next(X,Y), value(X,V), srtl(Y,SY), min_set(V,S), insert(SY,V,S).
\end{minted}
The predicate above defines a sorted singly linked list by enhancing
the ordinary singly linked list predicate with the constraint
\pcode{min_set(V, S)} that states that the value \pcode{V} is equal to
the smallest value in the set \pcode{S}. The \pcode{insert(S1, V, S)}
and \pcode{empty(S)} (\ie, \pcode{s == {v} ++ s1} and \pcode{s == {}}
in the SLL example) are defined using \prolog built-in predicates that
correspond to ordinary functions in set theory.
Other \prolog-style predicates used in the synthesised SL solutions
are data-structure specific and are extracted from the user-provided
memory graphs.
We leave till later (\autoref{sec:sldomain}) the issue of ensuring
that a \prolog predicate is also a \emph{valid} SL predicate in the
sense that it does not use SL connectives in a contradictory way,
allowing one to derive falsehood from its definition.

%
% Most of the predicates contributing to the definition of \code{srtl}
% come from predefined vocabularies, which the synthesiser will be aware
% of. During the search, it will be looking for definitions that can fix
% a number of clauses and combine available constraints, using the
% user-provided examples to validate its guesses.
%

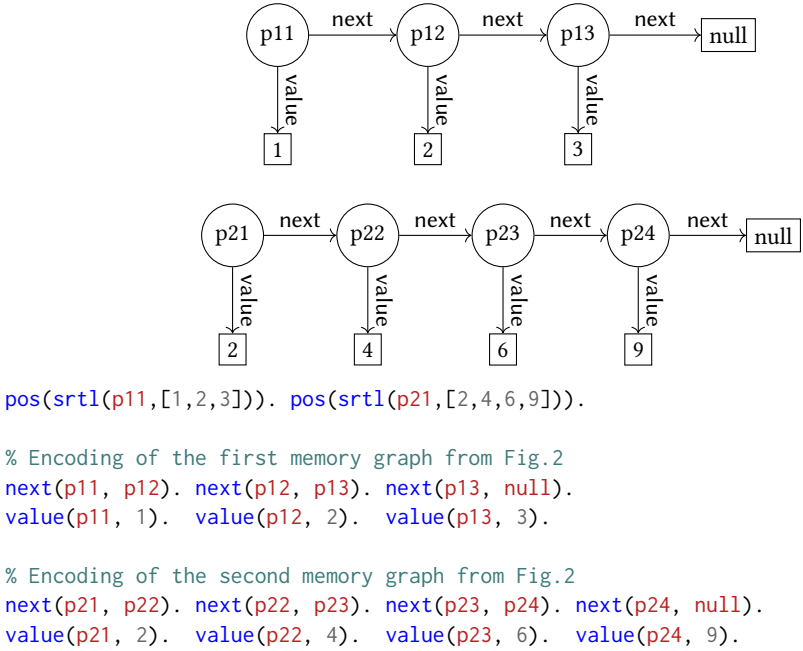
\begin{figure}[t]
%\vspace{-15pt}
\centering
% \scalebox{0.75}{\input{figure/graphs}}
{\small{
\begin{tabular}{c}
    \begin{tikzpicture}
        \node[circle, draw] (n1) at (0,1.5) {p11};
        \node[circle, draw] (n2) at (2,1.5) {p12};
        \node[circle, draw] (n3) at (4,1.5) {p13};
        \node[draw] (null) at (6,1.5) {null};
        \node[draw] (v1) at (0,0) {1};
        \node[draw] (v2) at (2,0) {2};
        \node[draw] (v3) at (4,0) {3};
        \draw[->] (n1) --  node [above,midway] {\mynext} (n2);
        \draw[->] (n2) --  node [above,midway] {\mynext} (n3);
        \draw[->] (n3) --  node [above,midway] {\mynext} (null);
        \draw[->] (n1) --  node [midway] [above,midway,sloped] {\myvalue} (v1);
        \draw[->] (n2) --  node [midway] [above,midway,sloped] {\myvalue} (v2);
        \draw[->] (n3) --  node [midway] [above,midway,sloped] {\myvalue} (v3);
    \end{tikzpicture}
\\
\\
  \begin{tikzpicture}
        \node[circle, draw] (n1) at (0,1.5) {p21};
        \node[circle, draw] (n2) at (1.8,1.5) {p22};
        \node[circle, draw] (n3) at (3.6,1.5) {p23};
        \node[circle, draw] (n4) at (5.4,1.5) {p24};
        \node[draw] (null) at (7.2,1.5) {null};
        \node[draw] (v1) at (0,0) {2};
        \node[draw] (v2) at (1.8,0) {4};
        \node[draw] (v3) at (3.6,0) {6};
        \node[draw] (v4) at (5.4,0) {9};
        \draw[->] (n1) --  node [above,midway] {\mynext} (n2);
        \draw[->] (n2) --  node [above,midway] {\mynext} (n3);
        \draw[->] (n3) --  node [above,midway] {\mynext} (n4);
        \draw[->] (n4) --  node [above,midway] {\mynext} (null);
        \draw[->] (n1) --  node [above,midway,sloped] {\myvalue} (v1);
        \draw[->] (n2) --  node [above,midway,sloped] {\myvalue} (v2);
        \draw[->] (n3) --  node [above,midway,sloped] {\myvalue} (v3);
        \draw[->] (n4) --  node [above,midway,sloped] {\myvalue} (v4);
    \end{tikzpicture}
\end{tabular}
}}
\\[5pt]
\begin{minted}[fontsize=\small]{prolog}
  pos(srtl(p11,[1,2,3])). pos(srtl(p21,[2,4,6,9])).

  % Encoding of the first memory graph from Fig.2
  next(p11, p12). next(p12, p13). next(p13, null). 
  value(p11, 1).  value(p12, 2).  value(p13, 3).

  % Encoding of the second memory graph from Fig.2
  next(p21, p22). next(p22, p23). next(p23, p24). next(p24, null).
  value(p21, 2).  value(p22, 4).  value(p23, 6).  value(p24, 9).
\end{minted}

    \caption{Positive examples of sorted list heap graphs, with the corresponding logic encoding.}
    \label{fig:sorted-list}
\end{figure}

The top of \autoref{fig:sorted-list} shows two memory graphs of sorted
lists that can be used to synthesise \pcode{srtl()}.
For a more natural representation in terms of Logic Programming, we
use Java-style naming of structure components, \ie, fields such as
{\small{\textsf{value}}} and {\small{\textsf{next}}} instead of
C-style integer pointer offsets;
these fields provide the data-specific predicates (\ie, \pcode{next()}
and \pcode{value()}) to the synthesiser. In the bottom of
\autoref{fig:sorted-list}, we provide the corresponding logic
representations of the inputs to the synthesiser, consisting of
positive examples (\ie, instances of the sought predicate that are
expected to be true), and the background knowledge (\ie, encoding of
the corresponding memory graphs) that should be used to derive those
examples using predicate candidates.
Given all this information, a synthesiser should be able to generate
the predicate \pcode{srtl()} that satisfies the positive examples.
That is, using the traditional program synthesis from
input-output pairs as an analogy~\cite{GulwaniPS17}, the facts in
background knowledge (\eg, \pcode{next(p11, p12)}) are inputs, the
examples (\eg, \pcode{pos(srtl(p11, [1,2,3]))}) are the outputs, and
the solution is the program (\ie, the predicate) to be synthesised.

% \begin{figure}[t]
% % \setlength{\belowcaptionskip}{-10pt}
% % \setlength{\abovecaptionskip}{5pt}
% \begin{minted}[fontsize=\small]{prolog}
%   pos(srtl(p11,[1,2,3])). pos(srtl(p21,[2,4,6,9])).

%   % Encoding of the first memory graph from Fig.2
%   next(p11, p12). next(p12, p13). next(p13, null). 
%   value(p11, 1).  value(p12, 2).  value(p13, 3).

%   % Encoding of the second memory graph from Fig.2
%   next(p21, p22). next(p22, p23). next(p23, p24). next(p24, null).
%   value(p21, 2).  value(p22, 4).  value(p23, 6).  value(p24, 9).
% \end{minted}
% \caption{Positive examples for \code{srtl} in \popper. \todo{merge to one figure?}}
% \label{fig:examples}
% \end{figure}

\subsection{Predicate Synthesis via Answer Set Programming}
\label{sec:asp}

We observe that the synthesis of SL predicates can be regarded as a
Logic Programming synthesis task, studied extensively in the field of
Inductive Logic Programming
(ILP)~\cite{Muggleton91,cropper2020turning}. 
In ILP, the synthesis of definition is done by generating hypotheses
(\ie, predicates) and testing them against the provided examples.
Efficient generation of hypotheses in ILP is typically implemented
using Answer Set Programming (ASP)~\cite{gebser2022answer,aspguide}, a
constraint solving-based search-and-optimisation methodology that
allows for effectively pruning the search space of candidate
definitions and is used in many state-of-the-art ILP systems:
\aspal~\cite{corapi2011inductive}, \popper\cite{cropper2021learning},
and \aspsyn~\cite{bembenek2023smt}.

To see how ASP can be used for synthesising logic predicates, we first
provide a brief introduction to its principles using very basic
examples. 
Considered a declarative logic programming paradigm, ASP can be
regarded as a syntactic extension of \tname{Datalog}, but with a
different semantics called \emph{stable model
  semantics}~\cite{gelfond1988stable}. The output of an ASP program is
a set of \emph{models} (\ie, so-called answer set) that satisfy the
program constraints. A (normal) ASP program consists of a set of
\emph{clauses} that are composed of a head (on the left of
$\leftarrow$) and a body (on the right of $\leftarrow$) as:
\[
   a\ \leftarrow\ b_1,\ldots\ b_m,\ \neg\ c_1,\ \ldots,\ \neg\ c_n.
\]
which can be read as "if $b_1,\ldots\ b_m$ are true and
$c_1,\ \ldots,\ c_n$ are false, then $a$ is true". 
The statements $a,~b_i,~c_i$ are called \emph{literals} and are
declared in the format of \pcode{pred_name(X1, ..., Xn)} (\ie, a
predicate with arity $n$). 
A clause is called an \emph{integrity constraint} when its head (\ie,
the statement on the left-hand side of $\leftarrow$) is empty, which
means it is inconsistent if the body is true; a clause is called a
\emph{fact} when its body is empty, which means the head is always
true.

Instead of describing the formal definition of a stable
model, we show simple examples of ASP program and the corresponding
answer sets below:
\[
  % \begin{footnotesize}
    \begin{tabular}{c|l|l}
      No. &ASP Program&Answer Sets\\\hline
      1&\pcode{a :- b. b.}&\pcode{{a,b}}\\
      2&\pcode{a :- not b. b.}&\pcode{{b}}\\
      3&\pcode{a :- not b. b :- not a.}&\pcode{{a},{b}}\\
      4&\pcode{a :- not b. b :- not a. :- a.}&\pcode{{b}}\\
    \end{tabular}
  % \end{footnotesize}
\]
The arity of the literals \pcode{a} and \pcode{b} is 0, and
\pcode{:-}, \pcode{not} in the programs mean $\leftarrow$ and $\neg$
correspondingly.
The programs in the table above and their answer sets should be
interpreted as follows.

\begin{itemize}
\item Program 1 is a simple program with a rule (general clause)
  \pcode{a :- b} and the fact \pcode{b} postulating that \pcode{b} is
  true. The answer set is \pcode{{a,b}} meaning that \pcode{a} and
  \pcode{b} can be true together, given the constraints.
\item Program 2 is similar to Program 1, but with the rule \pcode{a :-
    not b} instead, which means \pcode{a} is true when \pcode{b} is
  false. The answer set is \pcode{{b}}, no clause is making \pcode{a}
  true.
\item Program 3 is a program with two rules. The answer set is
  \pcode{{a},{b}} because \pcode{a} is true when \pcode{b} is false,
  and \pcode{b} is true when \pcode{a} is false, so the answer set is
  the combination of the two cases.
\item Program 4 is extended from Program 3 with another clause, that
  is an integrity constraint \pcode{:- a} forcing \pcode{a} to be false. The answer set is \pcode{{b}}
  because \pcode{b} is true when \pcode{a} is false, and the program is
  consistent only in this case (in contrast with Program~3).
\end{itemize}

\noindent
As Program 4 demonstrates, the integrity constraint can be used to
prune the answer sets---a very useful feature for synthesis tasks
(more discussion on ASP versus SMT is in \autoref{sec:related}).

Each program above can be regarded as an \emph{enumeration in the
  powerset} of the set with two elements, \pcode{a} and \pcode{b},
returning \emph{all sets} that satisfy the relations (\ie, the ASP
clauses) between the elements.
An ILP system, such as \popper~\cite{cropper2021learning}, relies on
an ASP solver to encode the enumerative search among all possible
combinations of literals to synthesise logic predicates.

As a concrete example of ILP via ASP, consider synthesising the
definition of a predicate \pcode{plus_two(A, B)} using six literals:
\pcode{succ(A, A)}, \pcode{succ(A, C)}, \pcode{succ(B, A)},
\pcode{succ(B, B)}, \pcode{succ(B, C)}, and \pcode{succ(C, B)} to
build the body of the predicate, with examples \pcode{plus_two(1,3)}
and \pcode{plus_two(2,4)}.
An ASP-based synthesiser would try to find a definition of
\pcode{plus_two(A, B)} as a suitable subset of all their $2^6=64$
possible combinations.
While doing so, it would also make use of the natural restrictions
that can be encoded as integrity constraints, such as
(1)~no free variable is allowed in the body (hence \pcode{{succ(A, B), succ(C, B)}} is
not a valid answer set because \pcode{C} is free), and 
(2)~all input variables \pcode{A} and \pcode{B} should appear in the
body (hence \pcode{{succ(B, C)}} is not a valid synthesis candidate).
As we will show, such constraints are also useful for encoding the
domain-specific knowledge about validity of SL predicates, and can be
efficiently solved by ASP solvers.
%
% Finally, 
% answer set \pcode{{succ(A,C), succ(C,B)}} pass
% the input examples' test such as \pcode{plus_two(1,3)} and
% \pcode{plus_two(2,4)}.

Moreover, the incrementality of ASP solvers make it possible to
constrain the search space continuously~\cite{gebser2019multi}. For
instance, assume the following hypothesis is obtained during the
search:
\begin{minted}[fontsize=\small]{prolog}
  plus_two(A, B) :- succ(A, A), succ(B, B).
\end{minted}
%
% which corresponds to the answer set
% %
% \begin{minted}[fontsize=\scriptsize]{prolog}
%   head(0,srtl,(X,S)). body(0,value,(X,V)). body(0,min_set,(V,S)).
% \end{minted}
% %
% That is, the clause 0 has one head predicate \pcode{srtl(X, S)} and
% two body predicates \pcode{value(X, V)} and \pcode{min_set(V, S)}.

% MARK: should be 5 pages above
%
%
After testing it by \prolog against the examples, \popper finds that
none of the provided positive examples can be derived using this
solution.
As the result, other hypotheses that are more \emph{specialised} (\ie,
more constrained in the bodies) than it, such as the definition of
\pcode{plus_two()} below.
\begin{minted}[fontsize=\small]{prolog}
  plus_two(A, B) :- succ(A, A), succ(B, A), succ(B, B).
\end{minted}
will also entail no positive examples.
To this end, with the help of ASP, a classic ILP performs search for
a candidate hypothesis that passed all tests and has the smallest size
(\ie, number of literals in the predicate). Such \emph{optimal}
hypothesis for our example is the synthesised solution:
\begin{minted}[fontsize=\small]{prolog}
  plus_two(A, B) :- succ(A, C), succ(C, B).
\end{minted}

\subsection{Synthesis without Negative Examples}
\label{sec:approach}

The classic ILP comes with an important limitation: in general, it
requires both \emph{positive} and \emph{negative} examples to learn a
predicate. As we explain below, the need for the latter kind of
examples makes it challenging to employ ILP \emph{as-is} as a
pragmatic approach for synthesising SL predicates.

\paragraph{Why Negative Examples are Necessary in ILP}

Let us get back to our examples with synthesising a sorted singly
linked list predicate from positive examples of memory graphs in
\autoref{fig:sorted-list}. 
With the conventional ILP, the learned hypothesis by \popper is as
follows, and it is not what we need:
\begin{minted}[fontsize=\small]{prolog}
  srtl(X, S) :- empty(S), nullptr(X).
  srtl(X, S) :- next(X, Y), value(X, V), insert(SY, V, S), srtl(Y, SY).
\end{minted}
The learned hypothesis does not define a sorted list,
but an ordinary (unsorted) singly linked list.
The reason is: in the absence of the negative example, this is a
consistent hypothesis that is smaller in size than the correct
definition of \code{srtl}.
So if we want to learn the correct predicate, we need to provide negative examples that are inconsistent with the incorrect hypothesis, such as
\begin{minted}[fontsize=\small]{prolog}
  neg(srtl(p11, [1,2,3])).
  % Encoding of a negative example
  next(n1, n2). next(n2, n3). next(n3, null). 
  value(n1, 2). value(n2, 1). value(n3, 3).
\end{minted}
which is a singly linked but not sorted list. To summarise, when
performing synthesis via classic ILP, negative examples are necessary
to avoid the predicate being \emph{too general}.

\paragraph{Challenges in Obtaining Negative Examples.}
What makes things worse is that ILP systems rely on
\emph{representative} negative examples to correctly prune the
generality, which are hard to obtain automatically.
%
% The difference between positive and negative examples is that, a
% good set of positive examples only needs to guarantee that all
% instances follow the predicate, while a good set of negative
% examples needs to cover all possible ways that the predicate can be
% wrong (\ie, being too general).
%
The difference between positive and negative examples is that, a good
set of positive examples (\(\mathbf{Pos}\)) only needs to guarantee
that all instances follow the predicate (\(\mathcal{P}\)), while a
good set of negative examples (\(\mathbf{Neg}\)) need to be much more
elaborated, so it could cover any possible way in which the predicate
can be wrong. This difference is expressed by the following
quantifications:
\[
  \forall \mathbf{e^+} \in \mathbf{Pos}, \mathcal{P}(\mathbf{e^+})
  \quad \text{vs.} \quad
  \forall \mathcal{P'} \subset \mathcal{P}, \exists \mathbf{e^-} \in \mathbf{Neg},  \mathcal{P'}(\mathbf{e^-}) \land \neg\mathcal{P}(\mathbf{e^-})
\]
That is, unlike a good positive example set, which is only quantified
over the examples, a good negative example set is quantified over the
\emph{predicates} and the examples, which makes it much harder to
achieve. As a concrete example of this phenomenon, imagine learning a
predicate for balanced binary trees.
A good set of negative examples would contain instances where
(a)~the height of the left subtree is too large, (b)~the height of the
right subtree is too large, (c)~the imbalance manifests recursively in
both left and right subtrees.
Without all these rather specific negative instances, it is possible
to learn a predicate with a constraint on the subtrees
\pcode{height(Left) <= height(Right) + 1}, which is not wrong but is
imprecise. This is not just an issue for SL predicates domain, but
also for general logical learning, witnesses by the fact that in
existing ILP benchmarks~\cite{cropper2021learning,thakkar2021example}
and specification mining framework \cite{10.1145/3622876} high-quality
negative examples are often crafted manually.

%\vspace{-3pt}
\begin{figure}[t]
%  \abovecaptionskip=4pt
%  \belowcaptionskip=-10pt
  \centering
  \includegraphics[width=0.5\textwidth]{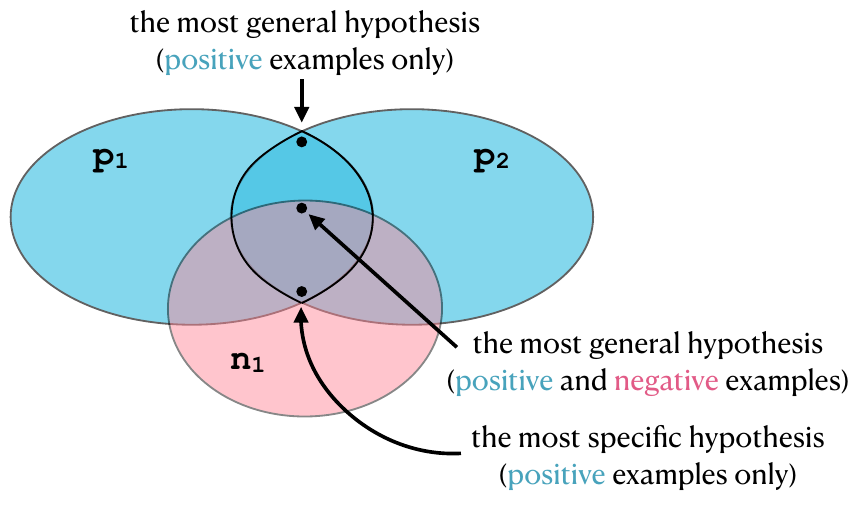}
  \caption{The effect of positive and negative examples on search.}
      \label{fig:illustration}
\end{figure}

This state of affairs brings us to the two key novel ideas of this
work that enable efficient synthesis of SL predicates only from
positive examples.

\subsubsection{Key Idea 1: Learning with Specificity.}
\label{sec:most-specific}

As discussed above, without negative examples a solution delivered by
\popper, while valid, may not be specific enough.
To provide more intuition on the space of possible design choices in
finding the best solution, together with the reason why positive-only
learning is possible, let us consider the illustration in
\autoref{fig:illustration}. The ``up''/``down'' in this figure
(informally) means ``more general/specific'', where the top/bottom
are constant True/False (\ie, the lattice is defined by
subsumption~\cite{muggleton1995inverse}).
Providing two positive examples, \code{p1} and \code{p2}, restricts
the search space for the solution hypothesis to the intersection of
their own spaces, with the most general one chosen as the solution.
Adding a negative example \code{n1} provides more restrictions, thus
allowing for more specific most-general solution.
From this diagram, one can see that, even without a negative example,
we can have a
\emph{tighter}~\cite{DBLP:journals/pacmpl/AstorgaSDWMX21} solution
(\ie, with stronger restrictions given the same number of clauses) if
we consider not the most general, but the most specific candidate in
the intersection of the search spaces defined by \code{p1} and
\code{p2} (generally, positive examples only).

Therefore, the basic idea of our positive-only learning is: to learn
\emph{the most specific} predicate admitting all provided examples.
The only problem is: what is the definition of ``specificity''? As the
opposite of ``generality'' (the program with the smallest number of
constraints), it is not practical to take the \emph{largest}
hypothesis as the most specific, as it would lead to \emph{redundant}
constraints.
As an example, consider the following valid formulation of the sorted
linked list predicate that requires, in its second clause, that
\code{T} = \code{SY}~$\cup \set{\codeinmath{V}}$ and \code{S} =
\code{T}~$\cup \set{\codeinmath{V}}$:
\begin{minted}[fontsize=\small]{prolog}
srtl(X, S) :- empty(S), nullptr(X).
srtl(X, S) :- next(X, Y), value(X, V), srtl(Y, SY), insert(SY, V, T), insert(T, V, S).
\end{minted}
Clearly, the last conjunction \pcode{insert(T, V, S)} is redundant and
can be removed because of the properties of the \pcode{insert(...)}
predicate.

To eliminate such candidates with redundancies, our approach for
positive-only learning encodes intrinsic logical properties of customised
predicates to \emph{minimise} the generated SL predicates.
Our tool comes with a pre-defined collection of properties of
predicates for common arithmetic (\eg, calculation, comparison) and
set operations (\eg, insertion, union) that are included into the
synthesis automatically.
More customised predicates can be added
by the user (with additional clause minimisation rules). 
After performing the minimisation hinted above (detailed in
\autoref{sec:normalise}), we define the \emph{specificity} of a predicate
candidate based on its size \wrt other available candidates
(\cf~\autoref{sec:tool}). The solution that is locally-optimal (\ie, the
strongest in the search space) will be adopted as the most specific predicate that is implied by
all the positive examples.

\subsubsection{Key Idea 2: Separation Logic-Based Pruning.}
\label{sec:pruning}

The domain of our synthesis, \ie, Separation Logic, provides effective
ways to prune the search space and accelerate the synthesis process. 
Postponing the detailed explanation of the optimisations until
\autoref{sec:SLsynthesis}, as an example, consider an important
property the separating conjunction stating that the fact
\code{x}~$\mapsto$~\code{a * y} $\mapsto$ \code{b} implies
\code{x}~$\neq$~\code{y} because of the disjointness assumption
enforced by \code{*}.
This property can be encoded as a pruning strategy via ASP integrity
constraints (\autoref{sec:asp}) that are generated by our tool for
each synthesis task.
%
% \begin{minted}[fontsize=\scriptsize]{prolog}
%  :- clause(C), pointer(X), #count{V: body_lit(C, next, (X, V))} > 1.
% \end{minted}
%
Even for a \emph{doubly linked list}, one of the simplest predicates
in our benchmark (\cf~\autoref{sec:done}), without such optimisations,
the synthesis time is increased from 3 to 339 seconds; the synthesis
of more complex predicates does not terminate in 20 minutes without
SL-specific pruning.

\subsection{Automatically Generating Positive Examples}

So far, we assumed that the positive examples are provided by the
user, in the format of memory graphs. 
In practice, one may expect that such examples can be obtained in a
more automated way, \eg, from the available programs that manipulate
with the respective data structures.
For instance, an existing work on shape analysis~\cite{le2019sling} uses
a debugger for extracting memory graphs from a program's execution,
with the assumption that (1)~the user indicates the line of code to
extract the memory graph, and (2)~an input for the program is provided.
Unfortunately, this rules out a large set of programs that
\emph{expect} a data structure rather than generate one: without a
suitable input we simply cannot run them to obtain a graph, and constructing
such input is a task not much easier than encoding a memory graph manually.

% However, considering a function "removing a certain element from a
% binary search tree", we cannot obtain the trees without a constructor
% function, thus the memory graphs cannot be extracted because of
% non-executability of the function itself.

To address this issue, we implemented \ggen---a tool that can
automatically generates positive example in the form of arbitrary valid
memory graphs of a data structure from the program that manipulates
with the structure, without requiring the user to provide any input.
The only assumption we make is that the program in question is
instrumented with test assertions, which can be used to filter out the
invalid memory graphs from the randomly generated ones. We further
show in \autoref{sec:verification} that \ggen can effectively generate
input graphs for synthesising SL predicates from real
heap-manipulating programs, reducing the specification burden for
proving their correctness.

% \vspace{-5pt}

\begin{figure}[!t]
  \centering
  % \abovecaptionskip=5pt
  % \belowcaptionskip=-15pt
      
      \begin{adjustbox}{width=0.96\textwidth}
        \begin{tikzpicture}[
    node distance=2em and 3em,
    every node/.style={rectangle, draw, rounded corners, text centered, minimum height=1em},
    arrow/.style={-Stealth, thick},
    decision/.style={diamond, draw, text centered, inner sep=0pt, aspect=2},
]

% Nodes
\node[decision] (decision) {\faUser};
\node[above right=1em and 6em of decision] (generator) {\ggen ($\S\text{\ref{sec:generator}}$)};
\node[below right=1em and 6em of decision] (manual) {Write Manually};
% \node[right=of decision, text width=12em] (fuzzing) {Automated obtaining from programs\\ or Manually Written};
\node[right=22em of decision] (sippy) {\tool ($\S\text{\ref{sec:positive} \&}~\S\text{\ref{sec:SLsynthesis}}$)} ;
\node[above right=1em and 12em of sippy] (verifiers) {Verification ($\S\text{\ref{sec:verification}}$)};
\node[right=12em of sippy] (synthesizers) {Synthesis ($\S\text{\ref{sec:synthesis}}$)};
\node[below right=1em and 12em of sippy] (others) {Other applications};

% Edges with labels
\draw[arrow] (decision) -- node[above, draw=none, align=center, sloped] {Program} (generator);
\draw[arrow] (decision) -- node[below, draw=none, align=center, sloped] {or} (manual);
\draw[arrow] (manual) -- node[above, draw=none, align=left] {Memory Graphs~~} (sippy);
\draw[arrow] (generator) --  (sippy);
\draw[arrow] (sippy) -- node[above, draw=none, align=center] {Heap Predicates} node[below,draw=none,align=center] {for further applications} ++(14em, 0) coordinate (branch);
% \draw[arrow] (mid) -- node[above, draw=none, rectangle, align=center, sloped] {2} ++(1em, 0) coordinate (branch);
\draw[arrow] (branch) |- (verifiers);
\draw[arrow] (branch) |- (synthesizers);
\draw[arrow] (branch) |- (others);

\end{tikzpicture}
      \end{adjustbox}
      \caption{The Workflow of \tool}
      
    \label{fig:extended}
\end{figure}
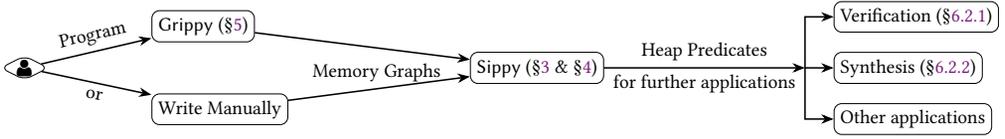

%\vspace{-5pt}

\subsection{Putting It All Together}

We implemented our algorithm for inductive synthesis of SL predicates
as the core of our tool called \tool. \autoref{fig:extended} shows the
high-level workflow of \tool. Starting from the left, the user can
provide positive examples (\ie, structure-specific heap graphs) for
the synthesiser by either using our graph generator \ggen on programs
that expect the data structure instance (\cf \autoref{sec:generator}),
or by manually writing them (\eg, in the style of Story-Board Programming
\cite{singh2012spt}).
Given the graph examples, \tool synthesises an inductive predicate
definition, which can be further used for program verification in SL-based
verifiers (\autoref{sec:verification}), for program synthesis
(\autoref{sec:synthesis}), or by any other SL-based tool.

% !TEX root = ../paper.tex
\section{Positive-Only Predicate Learning}
\label{sec:positive}

In this section, we describe our approach for learning predicates from
positive-only examples not specific to SL predicates. To make the learning effective, we introduce the idea of predicate clause \emph{minimisation}, then explain how our specificity-based positive-only
learning works by showing its difference from the standard ILP systems \cite{cropper2021learning,cropper2022learning2}.

\vspace{-5pt}
\subsection{Normalising the Positive-Only Learning}
\label{sec:normalise}

As discussed in \autoref{sec:most-specific}, without negative
examples, adopting the smallest-size hypothesis, as done in many
program synthesisers~\cite{ji2021generalizable,lee2021combining}, can
lead to finding solutions that are too imprecise.
What about selecting the candidate with the largest size instead?
As shown by the {sorted list} (\pcode{srtl(X, S)}) example, more
literals in the body do not mean the predicate is more specific if the
body contains \emph{redundant} constraints among variables. Therefore, we first define a normalisation procedure to remove the redundancy in the predicate clauses with the help of the logical
\emph{entailment} (denoted as~$\models$ for \prolog clauses known as
\emph{definite clauses} in first-order logic).

\begin{definition}[Clause Minimisation]
  Let $\mathbf{numlit}(h)$ denotes the number of literals in the body
  of a clause $h$, and $ A \iff B$ denotes $A~\models~B$ and
  $B~\models~A$. A predicate clause $h$ is eliminated \Iff
  $\exists h_0,~h_0 = \arg\min_{X \in S} \mathbf{numlit}(X)~\land$
  $\mathbf{numlit}(h_0) < \mathbf{numlit}(h)$, where $S$ =
  $\{X \mid X \iff h\}$.
  
\end{definition}

In plain words, if for any clause $h$ there is a shorter clause $h_0$ equivalent (defined by entailments) to $h$, $h$ is
eliminated in the synthesis. Such minimisation is naturally encoded by ASP (detailed in \autoref{app:minimisation}).
Our experiments show that enumerable (also decidable) entailment rules
are effective for the synthesis of SL predicates within finite clause
length (\cf \autoref{sec:done}). Note that the minimisation is essentially a pruning for unnecessary clauses: we do not handle predicate-level redundancy (\aka Horn literal
minimisation~\cite{cepek1997stuctural}) because it is NP-hard~\cite{DBLP:journals/ai/HammerK93} in general.

\subsection{An Algorithm for Positive-Only Learning}
\label{sec:popper2}
\setlength{\textfloatsep}{0.5em} % Adjust the value as needed
\begin{algorithm}[!t]
  \caption{The \darkgraybox{positive-only learning (POL)} algorithm v.s. the standard ASP-based \graybox{ILP}
    }
    \label{alg:popper}
    \begin{algorithmic}[1]
    \small
    \Require Search space: initial constraints \pcode{in_cons}, size limit \pcode{max_size}
    \Require User inputs: background knowledge \pcode{bk}, pos/neg examples \pcode{exs}
    \Procedure{\darkgraybox{POL} or \graybox{ILP}}{\pcode{bk, exs, in_cons, max_size}}
        \State \pcode{cons} = \pcode{in_cons}, \pcode{size} = 1, \pcode{sol} = \pcode{True}
        \While{\pcode{size}$ \neq$ \pcode{max_size}}
            \State \pcode{h} = \Call{Clingo}{\pcode{cons}, \pcode{size}} \Comment{\textit{generate}}
            \If{\pcode{h} is \textbf{UNSAT}}
                \State \pcode{size} += 1
            \Else
                \State \tikzmark{startp}\pcode{outcome} = \Call{Prolog}{\pcode{h}, \pcode{exs}, \pcode{bk}} \Comment{\textit{test}}
                \If{\pcode{outcome} is \textbf{complete} and \textbf{consistent}} 
                    \State \pcode{sol} = \pcode{h}
                    \State \textbf{break}
                \EndIf
                \State \tikzmark{startq}\pcode{cons} += \pcode{pruning(h, outcome)} \Comment{\textit{prune}\tikzmark{endp}}
                \If{\pcode{outcome} is \textbf{complete} and \pcode{sol} $\not\models$ \pcode{h}}
                    \State \pcode{sol} = \pcode{update_sol(h, comparable(h, sol))}
                \EndIf
                \State \pcode{cons} += \pcode{new_pruning(h, outcome)} \Comment{\textit{prune (updated)}\tikzmark{endq}}
            \EndIf
        \EndWhile
        \State \Return \pcode{sol}
    \EndProcedure
    \end{algorithmic}
    \begin{tikzpicture}[remember picture, overlay,opacity=0.4]
        \draw[block filldraw1] ($(pic cs:startp) +(-0.1, -0.1)$) rectangle ($(pic cs:endp) +(0, -0.1)$);
        \draw[block filldraw2] ($(pic cs:startq) +(-0.1, -0.1)$) rectangle ($(pic cs:endq) +(0, -0.1)$);
    \end{tikzpicture}
    \vspace{-1em}
\end{algorithm}

Our approach to learning predicates from positive-only examples can easily make use of components from the existing ASP-based ILP systems~\cite{cropper2021learning,cropper2022learning2}, seamlessly. In this section, we describe both the basic ILP learning loop by ASP and the procedure of our positive-only learning.

\autoref{alg:popper} (with the \graybox{light grey cover}) shows the
pseudocode of classic ILP learning loop, which follows the
``generate-test-constrain'' approach. It takes (1)~background
knowledge \pcode{bk}, (2)~positive and negative examples \pcode{exs},
and (3)~learning bias defining parameters of the search
\pcode{max_size}, such as the maximum size of a predicate to be
searched and other customisable parameters (\eg, whether to enable
mutually-recursive definitions), summarised as \pcode{in_cons}.
After encoding the iterative deepening search problem into an answer
set program, it first uses the ASP solver
\clingo~\cite{gebser2014clingo} to generate a hypothesis (line~4), and
then uses \prolog to test it against the provided examples (line~8).
A hypothesis is \emph{complete} if it entails all positive examples,
and \emph{consistent} if it entails no negative examples.
ILP returns a hypothesis as the solution if it is complete and
consistent (line~11); otherwise, it \emph{prunes} the search space
using the test outcome (line~12) and continues the search.
%
% The pruning procedure reducing it based on the failed
% \pcode{outcome} (\ie, not consistent and complete).
%
The test-based pruning works as follows.
Whenever hypothesis testing shows that not all positive examples are
true (incomplete), the pruning on \emph{specialisation} is applied;
whenever part of the negative examples are true (inconsistent), the pruning
on \emph{generalisation} is applied---ideas of the pruning are illustrated in
\autoref{sec:popper} and are detailed in~\cite{cropper2021learning}.
%
% The output of \popper is thus a complete and consistent hypothesis of
% the \emph{smallest size}, which is also a common optimality metric in
% program synthesis~\cite{ji2021generalizable,lee2021combining}.

% \subsubsection{The Effect of Introduced Enhancements.}

For the case of \emph{positive-only} learning
(\cf~\autoref{alg:popper} with the \darkgraybox{dark grey cover}), we
follow the approach from the work on \emph{specification
  synthesis}~\cite{park2023synthesizing} to output a set of
non-comparable (\ie, the element in the set is not entailed by any other) solutions as a conjunction: there are possibly several
solutions satisfying the positive examples where none of them is more
specific than the others, so the conjunction of them is the most
specific specification. 
The main difference from classic ILP starts from line~13, when a
hypothesis \pcode{h} is complete and is not entailed by the existing
solution set \pcode{sol}, the solution set is updated to adding
\pcode{h} to the set, together with removing all solutions in
\pcode{sol} that are comparable (semantically more general in our
case) to \pcode{h} (line~14).
The set of most specific solutions in the search space is returned
when the search is exhausted (line~16). The advanced pruning procedure
(line~15) is enabled by domain knowledge (\ie, Separation Logic
predicates in our case---\cf~\autoref{sec:sldomain}).

We conclude this section by proving the soundness of positive-only learning in \autoref{alg:popper}.

\begin{theorem}
  \label{thm:specific}
  The hypothesis set returned by the positive-only learning in
  \emph{\autoref{alg:popper}} contains all non-comparable, most specific predicates that is complete (\ie, entailing all examples) in the search
  space defined by the algorithm's initial constraints
  (\pcode{in_cons}) and the size limit (\pcode{max_size}) parameters.
\end{theorem}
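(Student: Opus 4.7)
The plan is to establish the theorem by exhibiting a loop invariant for the \texttt{while} loop in \autoref{alg:popper} and then showing that, upon termination, this invariant implies the claimed property. Concretely, I will maintain the following invariant at the top of each iteration: \pcode{sol} is exactly the set of non-comparable, most specific complete hypotheses among all hypotheses of size strictly less than \pcode{size} that lie in the search space defined by \pcode{in_cons} and have not been eliminated by pruning rules that are sound with respect to most specific solutions. Since the search space is finite (bounded by \pcode{max_size} and the finite set of literals encoded in \pcode{in_cons}), Clingo's iterative-deepening generation exhausts it, and termination of the loop follows. When the loop exits, the invariant directly yields the theorem.

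The key steps, in order, are as follows. First, I would verify that on each non-pruned iteration, \pcode{generate} produces some hypothesis \pcode{h} of the current \pcode{size} (or returns \textbf{UNSAT}, in which case \pcode{size} advances), so that every hypothesis in the search space is eventually either generated or provably pruned. Second, I would verify the maintenance case: when a fresh complete \pcode{h} is tested, either (a) \pcode{sol} $\models$ \pcode{h}, in which case \pcode{h} is not more specific than an existing solution and may be safely discarded, or (b) \pcode{sol} $\not\models$ \pcode{h}, in which case \pcode{update\_sol} adds \pcode{h} and removes from \pcode{sol} those entries strictly more general than \pcode{h} as identified by \pcode{comparable}. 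A short argument via the antisymmetry of entailment on equivalence classes shows that after this update \pcode{sol} remains a set of pairwise non-comparable, most specific complete hypotheses among those seen so far. Third, I would appeal to \autoref{sec:normalise} to rule out the pathological case in which two equivalent but syntactically distinct hypotheses are both retained: clause minimisation ensures each equivalence class has a unique normal-form representative in the search space. Finally, I would discharge soundness of pruning by case analysis on the two pruning steps (highlighted in the algorithm): the classical incompleteness-based specialisation pruning at line 12 removes only strict specialisations of an incomplete \pcode{h}, all of which are themselves incomplete and hence cannot be most specific complete solutions; the positive-only pruning at line 15 removes only hypotheses entailed by \pcode{h} (equivalently, more specific than a hypothesis already known to be complete and less redundant), which by construction are not locally maximal with respect to specificity modulo the minimisation relation.

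The main obstacle I anticipate is exactly this last step: arguing that \pcode{new\_pruning} at line 15 never discards a hypothesis that is itself a most specific complete solution. The subtle point is that ``most specific'' is taken modulo the equivalences induced by the normalisation of \autoref{sec:normalise} and, in the application to Separation Logic, by the domain-specific pruning promised in \autoref{sec:sldomain}. I would therefore need to state precisely which entailment relation underlies \pcode{comparable} and then show that \pcode{new\_pruning} only prunes hypotheses strictly below \pcode{h} in this relation, so that no incomparable candidate is lost. Once this is pinned down, the maintenance of the invariant and hence the theorem follow. A brief remark on termination and on why \pcode{sol} is finite at every iteration (each addition strictly changes the antichain in a finite poset) would close the argument.
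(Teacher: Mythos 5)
Your proposal is correct and follows essentially the same route as the paper: the paper's own proof is an induction on \texttt{max\_size}, which is precisely the induction form of your loop invariant over the iteratively-deepened search space. It is worth noting that the paper's argument is far terser than yours and silently assumes exactly the obligations you flag as the hard part --- soundness of the line-15 pruning (that \texttt{new\_pruning} never discards a non-comparable most specific complete hypothesis) and the uniqueness of representatives under the clause minimisation of Sec.~3.1 --- so your proposal is, if anything, a more complete version of the published argument.
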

\begin{proof}[Proof]
  By induction on the size limit \pcode{max_size} of the predicate: when \pcode{max_size} is 0, there is no predicate hypotheses, so \pcode{True} (the ``always true predicate'') is the only most specific one. Then assume that the theorem holds for \pcode{max_size} $n$, \ie, \pcode{sol_i} is the most specific hypothesis set; we prove it for \pcode{max_size} $n+1$.

  When \pcode{max_size} is $n+1$, based on the while loop in
  \autoref{alg:popper}, the search space for $n+1$ is the search space
  for $n$ plus when \pcode{size} is $n+1$. By the induction
  hypothesis, \pcode{sol_i} is the most specific hypothesis set in the search space
  for $n$, and the output \pcode{sol} is either \pcode{sol_i} or containing
  more specific predicates of space $n+1$ with all comparable predicates removed. Therefore, \pcode{sol} is the most
  specific hypothesis set in the search space with $n+1$ as \pcode{max_size}.

\end{proof}

%\vspace{-15pt}

\section{Separation Logic Predicate Synthesis via \tool}
\label{sec:SLsynthesis}

Having described the enhanced \emph{general-purpose} predicate
synthesis algorithm from positive-only examples,
we now show how to instantiate it for synthesis of inductive SL
predicates and improve the efficiency of the search algorithm by
exploiting domain-specific SL insights. We further discuss the
SL-validity of the synthesised predicates and the completeness of the
search algorithm.

\subsection{SL Predicates: Basics and Intricacies}
\label{sec:default}
 
\begin{figure}[!t]
  \centering
  \[
\begin{aligned}
  \sym{predicate} & ::= \sym{main\_pred} \;|\; \sym{main\_pred}  \sym{invented\_pred}\ast \\
  \sym{main\_pred} & ::= \sym{base\_clause}(\pre{main\_head}) \;|\; \sym{rec\_clause}(\pre{main\_head})\ast \\
  \sym{invented\_pred} & ::= \sym{base\_clause}(\pre{inv\_head}) \;|\; \sym{rec\_clause}(\pre{inv\_head})\ast \\
  \sym{base\_clause}(H) & ::= H(\codeinmath{This}, \sym{args}) \leftarrow \sym{base\_lit}\ast, \sym{pure\_lit}\ast \\
  \sym{rec\_clause}(H) & ::= H(\codeinmath{This}, \sym{args}) \leftarrow \sym{pointer\_lit}\ast, \sym{rec\_lit}\ast, \sym{pure\_lit}\ast \\
  \sym{literal}(R) & ::= R(\sym{args}) \\
  % Define the specific types of literals
  \sym{base\_lit} & ::= \sym{literal}(\pre{base\_pred}) \qquad\qquad \texttt{\% Pre-defined  for spatial relations} \\
  \sym{pure\_lit} & ::= \sym{literal}(\pre{pure\_pred}) \qquad \qquad\enspace \quad \texttt{\% Pre-defined  for pure relations} \\
  \sym{pointer\_lit} & ::= \pre{domain}(\codeinmath{This}, \sym{var}) \qquad\qquad\quad\enspace \texttt{\% Extract from the memory graphs} \\
  \sym{rec\_lit} & ::= \sym{literal}(\sym{head}) \\
  % General concepts
  \sym{args} & ::= \sym{var} \;|\; \sym{var}, \sym{args} \\
  \sym{var} & ::= \codeinmath{X1} \;|\; \codeinmath{X2} \;|\; \dots \;|\; \codeinmath{This} \\
  \sym{head} & ::= \pre{main\_head} \;|\; \pre{inv\_head} \quad \texttt{\% From the task or randomly generated}
\end{aligned}
\]
\caption{The grammar of the SL predicates, in basic Backus–Naur form
  (BNF), extended with (1) meta-variables $(\cdot)$ for specialising
  the symbols, and (2) pre-defined atoms denoted by $\pre{X}$ (with
  comments of their origins).}
  \label{fig:grammar}
\end{figure}

We define the space of SL predicates in a way standard for
Syntax-Guided Synthesis (SyGuS)~\cite{Alur-al:FMCAD13}.
The grammar of the SL predicates is shown in \autoref{fig:grammar}. An
SL predicate is either having a shape with a single main predicate, or
shaped by a main predicate together with a set of invented
\emph{auxiliary} predicates, which are required in the case of nested
linked structures.

Specific to the predicates,
both main predicate and invented predicates consist of the base and recursive clauses, where the base clause is the one that does not have any recursive calls, and the recursive clause is the one that has recursive calls. The head literal (\ie, before $\leftarrow$) in each clause has a fixed argument \pcode{This} that denotes the base address of the data structure (similar to the \textit{this} reference in object-oriented programming).
The body literals (\ie, after $\leftarrow$) in the clauses are defined in terms of different predicates: the base (and pure) predicates are pre-defined, but extensible, to capture the spatial relation among the pointer for the base clause (the pure constraints among variables in clauses, respectively); the domain predicates describe the points-to relations can be obtained from the memory graphs; the recursive predicates are the recursive calls to the main or invented predicates.

% To define a tractable search procedure

Three aspects in the grammar in \autoref{fig:grammar} contribute to the
infinite synthesis search space: (1) the length of clauses, (2) the
number of sub-clauses for each predicate, (3) the arity of the
invented predicates. 
%
% As customary in SyGuS, we bound them with constants.
%
For our task, we noticed that predicates for real-world structures
rarely require more than 10 literals in their bodies; two sub-clauses
for each predicate are sufficient to capture the common structures;
and the arity of the invented predicates is set to be not more than
the arity of the main predicates. Such bounds for hypothesis space are
common in almost all synthesis-by-example tools~(\eg,
\cite{cropper2021learning,lee2021combining,Si-al:FSE18}), not only to
make the synthesis tractable, but also to avoid
overfitting~\cite{PadhiMN019} (\eg, a predicate disjointing facts of
all examples).

Below, we discuss two challenges in make SL predicate synthesis
effective and efficient, together with how we address them in \tool.

% %
% The restriction of the search space is also a
% common solution to \emph{overfitting}, which is common in
% synthesis-by-example methods: there is always a
% predicate disjointing facts of all examples, but it is
% likely to be overfitted for specific examples. By providing a finite
% search space, such problem is eliminated.

% The outline  approaches here are
% presented in the context of our SL-specific setting, but are also
% applicable to other ILP tasks.

\subsubsection{Semantic-Based Pruning.}
\label{sec:semantics}

In most existing syntax-guided synthesisers \cite{cropper2021learning,Alur-al:FMCAD13,Si-al:FSE18}, the search is accelerated by pruning of the hypothesis search
space by employing the general \emph{syntax}
restrictions.
Other than limiting the syntax, we apply the following \emph{semantic}
properties' restriction of Separation Logic predicates to the search.
%
% Specifically, we encode the properties of SL predicates (\eg, \emph{minimum
%   reachability, pointer functionality}) with ASP so that many invalid
% outputs from \popper are eliminated. 
%
\begin{enumerate}
  \item \emph{Basic reachability}: no points-to relation appears in the
    body other than the ones from the \pcode{this} pointer. Thus, the clause \pcode{p(X, Y) :- next(X, Y), next(Y, Z), ...} is not 
    allowed as a candidate, because we expect all locations in the body to be
    accessible from \pcode{this} via fields.
  \item \emph{Basic assumptions}: the base (non-recursive) clause
    restricts \pcode{this} pointer to either be \code{null} or to equal to
    another pointer parameter variable. \Eg, \pcode{p(X, Y) :-
      nullptr(X), ...} is allowed, but \pcode{p(X, Y) :-
      next(X, Y), ...} cannot be a base clause.
  \item \emph{Restricted use of} \code{null}: if a variable \pcode{X} is
    a null-pointer (denoted by \pcode{nullptr(X)}), no
    more \pcode{X} occurs in the clause. \Eg, the clause \pcode{p(X, Y) :- nullptr(X), next(X, Y)}
    is not allowed.
  \item \emph{Quasi-well-founded recursion of payload}: the pure argument passed to a
    recursive call should (non-strictly) decrease. \Eg, for a clause
    \pcode{p(X, S) :- next(X, Z), p(Z, S1), ...}, the set \pcode{S} should contains \pcode{S1}. This
    is a common assumption in recursive program synthesis \cite{albarghouthi2013recursive,lee2021combining}.
  \item \emph{Heap functionality}: points-to relations of the same field
    should not target different locations. \Eg, a candidate clause cannot be \pcode{p(X, Y) :- next(X, Z), next(X, Z1), ...}.
  \end{enumerate}

\noindent
This list of search constraints represents a combination of the
properties implied by SL semantics (in a Java-style field-based memory
model) as well as by common properties of data structures, which are
essential for the efficient search of SL predicates.
The exact encodings of these constraints in ASP are provided and explained in \autoref{app:slsemantics}.

\subsubsection{Free Variables and Auxiliary Placeholders.}
\label{sec:auxiliary}

Free variables are common in SL predicates, \eg, the (implicitly
existentially-quantified) location \pcode{Y} in the base clause of the
 doubly linked list below:
\begin{minted}[fontsize=\small]{prolog}
  dll(X, Y) :- nullptr(X).
  dll(X, Y) :- next(X, Z), prev(X, Y), dll(Z, X).
\end{minted}
Unfortunately, completeness guarantees of pruning discussed in \autoref{sec:popper2}  do not hold for
predicates with free variables in the sense that
 a complete (\ie, valid) hypothesis with free
variables might  be wrongfully pruned during the search~\cite[\S{4.5}]{cropper2021learning}.
To address this problem, we introduce \emph{auxiliary placeholders}
into the search as a way to express predicate clauses with free
variables.
For example, the following doubly linked list predicate can be
regarded the same as the one above with \pcode{anypointer()}
placeholder, and \emph{can} be synthesised.
\begin{minted}[fontsize=\small]{prolog}
  dll(X, Y) :- nullptr(X), anypointer(Y).
  dll(X, Y) :- next(X, Z), prev(X, Y), dll(Z, X).
\end{minted}
On a technical level, this requires adding an ASP constraint (shown in \autoref{app:auxiliary})
that forces the parameter of the placeholder predicate (\pcode{Y}
here) to appear \emph{twice} in the whole clause, so it could be later
translated into a single occurrence of a free variable.

\subsection{Ensuring SL Validity in \prolog}
\label{sec:sldomain}

An astute reader can notice that the validity of the synthesised
predicates is not immediate due to our treatment of \prolog clauses as
SL assertions: the conjunction in \prolog does not guarantee the
\emph{separating conjunction} (\pcode{*}) in the SL sense. As an
example, consider the following simplified \prolog predicate for
binary trees:
\begin{minted}[fontsize=\small]{prolog}
  bi_tree(X) :- nullptr(X). 
  bi_tree(X) :- t1(X, L), t2(X, R), bi_tree(L), bi_tree(R).
\end{minted}
In this case, an instance of \pcode{bi_tree(X)} being evaluated to be
\emph{true} in \prolog can imply \emph{false} under SL semantics that
enforces heap disjointness: considering a memory graph with two nodes
\begin{minted}[fontsize=\small]{prolog}
  t1(n1, n2). t2(n1, n2). t1(n2, null). t2(n2, null).
\end{minted}
so that the graph fact \pcode{bi_tree(n1)} is provable in \prolog, but
the clauses \pcode{bi_tree(L)} and \pcode{bi_tree(R)} are
\emph{non-disjoint}.
Notice that, in our inductive synthesis setting, this situation would
correspond to having \emph{multiple} occurrences of the same points-to
fact in a memory graph representing a positive example for the
predicate, but should not be allowed by the definition of separating
conjunction.

To avoid this source of unsoundness, we use a straightforward solution
that enforces such separating conjunction semantics in \prolog: a
valid SL predicate is a complete \prolog predicate where the positive
examples succeed using each points-to fact \emph{exactly} one time (a
semantic property of SL assertions known as \emph{linearity}).
For the complete \prolog but invalid SL predicates, we also use the
\textit{specialisation} rule in \autoref{sec:popper2} to prune them:
if a predicate violates the linearity, then a more constrained one
will also violate it; this contributes to the new pruning in
line~15 of \autoref{alg:popper}.

We establish the following property of our SL-specific predicate
synthesis stating that, for the predicates in \tool's search space in
\autoref{sec:default}, if a memory graph is provable in \prolog with
linearity, then the corresponding heap is valid under SL semantics.

\begin{theorem}[SL Validity]
\label{thm:validity}
Let \pcode{F(h)} denote the memory graph of a heap \pcode{h}. For any
output predicate \pcode{p(X)} of \tool and any heap \pcode{h}, the
following fact holds: 
  \pcode{F(h)} $\models_{\prolog+\text{Lin}}$
\pcode{p(X)} $\Rightarrow$ \pcode{h} $\models_{\text{SL}}$ \pcode{p(X)}.
% \begin{center}
% \end{center}
\end{theorem}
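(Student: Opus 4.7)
The plan is to prove soundness by structural induction on the \prolog derivation of \pcode{F(h)} $\models_{\prolog+\text{Lin}}$ \pcode{p(X)}, showing that the linearity constraint forces the derivation tree to induce a partition of \pcode{h} into disjoint sub-heaps matching the separating conjunction in the SL reading of \pcode{p(X)}.

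The setup exploits the grammar in \autoref{fig:grammar} together with the semantic pruning rules of \autoref{sec:semantics}. Every clause of an output predicate is either a \emph{base clause}, whose body consists only of base and pure literals and therefore consumes no points-to facts, or a \emph{recursive clause}, whose body contains points-to literals anchored at \pcode{This}, recursive calls, and pure literals. The \emph{basic reachability} rule guarantees that the only points-to facts consumed directly in a clause are those emanating from \pcode{This}, while \emph{heap functionality} ensures that these facts form a well-defined points-to block with no field collisions. The \emph{base assumption} rule further forces \pcode{This} to be null or an alias of another parameter in any base clause.

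For the base case, a base-clause derivation consumes no points-to facts, so linearity together with the convention that \pcode{F(h)} is exactly the points-to content of \pcode{h} forces \pcode{h} to be empty; combined with the pure literal constraints, this matches the SL reading $\mathtt{emp} \wedge \mathit{pure}$. For the inductive step on a recursive-clause derivation, linearity partitions the points-to facts of \pcode{F(h)} into a root block \pcode{h0} used by the pointer literals of \pcode{This} and disjoint remainders \pcode{h1}, \ldots, \pcode{hm} used by the sub-derivations of the recursive calls \pcode{r1}, \ldots, \pcode{rm}. The induction hypothesis yields \pcode{hi} $\models_{\text{SL}}$ \pcode{ri}, and disjointness of the partition together with the functionality rule delivers $\mathtt{h} = \mathtt{h_0} \ast \mathtt{h_1} \ast \cdots \ast \mathtt{h_m}$, which is precisely the SL interpretation of the clause body.

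The main obstacle I expect is handling the auxiliary-placeholder construction from \autoref{sec:auxiliary}, which encodes existentially quantified free variables by requiring the placeholder's parameter to appear exactly twice. One must verify that any witness chosen by \prolog for such a parameter faithfully realises SL's existential quantifier; this follows because placeholders contribute no heap content and affect only pure reasoning, so the witness can be transported unchanged into the SL model. A secondary subtlety is invented and mutually recursive predicates, where the induction should be strengthened from clause structure to derivation depth; the per-step argument is unchanged, but one must confirm that the linearity partition respects the nested calls to invented predicates, which holds because each invented predicate's derivation is itself linear on its own sub-heap.
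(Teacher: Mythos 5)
The paper states this theorem without an accompanying proof---nothing follows the theorem environment in the source, and the appendix contains only ASP encodings---so there is no official argument to compare yours against; I can only assess your proposal on its own terms. It is the right argument and I believe it goes through: the entire content of the theorem is that the linearity side condition (each points-to fact consumed \emph{exactly once}) converts the flat \prolog conjunction of spatial literals into a genuine partition of the heap's facts, which is exactly what the separating conjunction demands, and your induction on the derivation tree makes that precise. Two places want a little more care than you give them. To invoke the induction hypothesis on a recursive call $r_i$ you must exhibit a heap $h_i$ with $F(h_i) \models_{\prolog+\text{Lin}} r_i$; the clean move is to \emph{define} $h_i$ as the restriction of $h$ to the facts consumed by the sub-derivation of $r_i$, so that linearity of the sub-derivation holds by construction, and then observe that the global ``exactly once'' condition yields both disjointness \emph{and} coverage of $h_0, h_1, \dots, h_m$---coverage being what guarantees the predicate's footprint is all of $h$, matching the tight reading of SL predicates the paper adopts (``contains no extra heap space''). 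Separately, heap functionality and basic reachability are search-space restrictions rather than soundness ingredients: the partition argument does not actually need them, since $F(h)$ is the graph of an actual heap and is therefore functional by construction, so you should not lean on them as load-bearing steps.
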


\subsection{The \tool Algorithm}
\label{sec:tool}

The only remaining step before putting all the pieces together is to
select the desired predicate from the set of non-comparable solutions
of positive-only learning. 
Even though predicates from POL can be conjuncted in general, the
semantics of SL predicates following the definition in
\autoref{sec:default} is more restrictive and the conjunction of valid
SL predicates may result in an ill-formed or a constantly false one. 
We found in practice that after the semantics-based normalisation from
\autoref{sec:normalise}, the number of literals can serve as a
\emph{good enough} specificity metric among incomparable predicates,
since containing more literals is likely to contain more information
or constraints about the heap structure. 
Following this intuition, we define the synthesis algorithm with
MAX\_POL function, which obtains the largest predicate from POL as per
\autoref{alg:popper}.

\begin{algorithm}[!t]
  \caption{The \tool loop for inductive predicate synthesis}
  \label{alg:sippy}
  \begin{algorithmic}[1]
  \small
  \Require memory graphs consist of \pcode{graph_bk, exs}
  \Procedure{Sippy}{\pcode{graph_bk, exs}}
      \State \pcode{graph_cons, shapes} = \pcode{graph_info(graph_bk, exs)}
      \State \pcode{max_var} = \pcode{max_body} = 1
      \State \pcode{sol} = \pcode{True} \Comment{The most general solution as initialisation}
      \For{\pcode{shape} in \pcode{shapes}}
        \State \pcode{max_size} = \pcode{maxsize(max_body, shape)}
        \State \pcode{h} = \Call{MAX\_POL}{\pcode{graph_bk, exs, graph_cons, max_size}}
        \If{\pcode{h} $\prec$ \pcode{sol}} \Comment{A more specific predicate is obtained}
            \State \pcode{max_var, max_body} = \pcode{(var_of(h), size_of(h))} + $\delta$
            \State \pcode{sol} = \pcode{h}
        \ElsIf{\pcode{sol} == True} \Comment{No predicate is yet learned}
            \If{\pcode{max_var} == \pcode{upper_bound}}
                \State \textbf{continue} \Comment{Try the next shape}
            \EndIf
            \State \pcode{max_var, max_body} += (1, 1)
        \Else
            \State \textbf{break} \Comment{No more specific predicate is found}
        \EndIf
      \EndFor
      \State \Return \pcode{sol}
  \EndProcedure
  \end{algorithmic}
\end{algorithm}

\autoref{alg:sippy} summarises the internal workings of \tool.
Our synthesiser takes as inputs memory graphs encoded as sets of logic
facts (\eg, \pcode{graph_bk}, such as \pcode{next(..)} and
\pcode{value(..)} from \autoref{fig:sorted-list}), positive examples of
heaps on which a predicate holds (\eg, \pcode{exs} as \pcode{srtl(..)}
from \autoref{fig:sorted-list}), so that the shape (matched with
pre-defined shapes in \autoref{sec:default}) a set of ASP constraints
(\pcode{graph_cons}) describing the information in the graphs (such as
the arity and types of the predicates to be learned) are obtained
(line~2).
Two parameters (line~3) for positive-only learning (MAX\_POL), (1)~the maximum number of
variables and (2)~the maximum size of the body of a predicate clause
for restricting the search space, are gradually increased during the
search using the following empirical strategy:
if no solution is valid (line~11), we either increase both parameters
by one to enlarge the space until finding one (line~14), or the
attempt on the current predicate shape fails (\ie, the upper bound of
the search space is reached), then
\tool will try synthesising using the next shape (line~13, \ie, more auxiliary predicates);
when obtaining one new better predicate than the existing, the search
parameters are both increased by a parameter~$\delta$ to find a
possibly more specific predicate (line~9), and the solution is
updated (line~10); if the learned predicate in the larger search space
is not better than the previous, we stop the search and output
(line~15-16).
The role of the parameter $\delta$ is, thus, to provide a ``margin''
for the completeness of the search: it is not guaranteed that \tool
will find the most specific solution \emph{across all possible search
  spaces}, but only in the search-space that is bound by the returned
output's parameters \emph{plus}~$\delta$.\footnote{We choose it to be (1,2) in our experiment from the natural observation: for our domain, we expect to have one body literal where the predicate is generating a new variable, and one more body literal that uses the new variable.}
Note that line~6 of \autoref{alg:sippy} features a function
\pcode{maxsize()} that calculates the maximum size of the search space based on the maximum number of variables and the predicate shape setting.

Finally, we provide a correctness argument for \tool. The soundness of
synthesising \emph{consistent} (\ie, inhabited) and \emph{well-formed}
(\ie, finitely provable) SL predicates is guaranteed by the soundness
of classic ILP and \autoref{thm:validity}. The following ``local''
completeness states that, given the output of \tool, no more specific
output can be discovered, \emph{even in} the larger search space
obtained by increasing the search parameters \emph{once} by $\delta$
at the line~9 of \autoref{alg:sippy}.

\begin{theorem}[Local Completeness of \tool]
\label{thm:completeness}
If the output of \tool is a predicate with the maximum number of
variables $m$ and the maximum length of the body $n$, then there is no
predicate with the maximum length of the body $m'$ and the maximum
number of variables $n'$, where $(m',~n')-(m,~n) = \delta$, that is
more specific than the output predicate.
\end{theorem}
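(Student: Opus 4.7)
The plan is to reduce local completeness of \tool directly to the soundness of positive-only learning established in \autoref{thm:specific}, by examining the termination condition of the outer loop in \autoref{alg:sippy}. Since \tool simply wraps \textsc{MAX\_POL} with an iterative-deepening strategy over the pair of parameters (maximum number of variables, maximum body length), the local completeness claim should follow from the fact that the loop only exits on line~16 after checking a strictly larger search space than the one that produced the returned solution.

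Concretely, I would first unfold what it means for \tool to return a predicate with parameters $(m,n)$. By line~9, immediately after adopting the current solution \pcode{sol} of size bounded by $(m,n)$, the parameters are updated to $(m,n)+\delta$. The loop then proceeds to the next shape iteration and calls \textsc{MAX\_POL} again with the enlarged parameters at lines~6--7. If this call produced a predicate \pcode{h} strictly more specific than \pcode{sol}, the test \pcode{h} $\prec$ \pcode{sol} on line~8 would be true and \pcode{sol} would have been updated; since the actual output is still at parameters $(m,n)$, this test must have failed, which forces the \textbf{break} on line~16 and terminates the procedure with the current \pcode{sol}.

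Next I would appeal to \autoref{thm:specific}: for the search space defined by parameters $(m,n)+\delta$ together with the SL-specific constraints in \pcode{graph_cons}, \textsc{MAX\_POL} returns (the largest of) the non-comparable most specific hypotheses. Hence no candidate predicate lying in that enlarged search space can be strictly more specific than the returned \pcode{h}; otherwise \textsc{MAX\_POL} would have produced such a witness and the \pcode{h} $\prec$ \pcode{sol} branch would have been taken. Combined with the fact that \pcode{sol} was itself the most specific hypothesis already in the smaller space (again by \autoref{thm:specific} applied to the previous iteration), this yields the required local completeness statement for parameters differing from $(m,n)$ by exactly $\delta$.

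The main obstacle, and the step I would spend the most care on, is bridging the gap between the \emph{set-valued} output of \textsc{POL} (which, by \autoref{thm:specific}, is the full set of non-comparable most specific predicates) and the \emph{single} predicate returned by \textsc{MAX\_POL} using the literal-count specificity metric from \autoref{sec:tool}. To handle this cleanly, I would formalise the informal ``tighter'' ordering used to compare candidates via the clause-minimisation procedure of \autoref{sec:normalise}, so that two normalised candidates with the same number of literals are indeed incomparable. Additional care is also needed to argue that the shape-iteration \pcode{for} loop in \autoref{alg:sippy} does not inadvertently skip a shape that could host a more specific predicate within the $(m,n)+\delta$ budget; this should follow from the fact that each shape's search space is independent, and any more specific candidate under the current shape must already be visible to the $(m,n)+\delta$ call of \textsc{MAX\_POL}.
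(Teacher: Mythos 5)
Your argument is essentially the paper's own proof: both hinge on the observation that \tool{} only returns \pcode{sol} after the search parameters have been bumped to $(m,n)+\delta$ and \textsc{MAX\_POL} has been re-run on that enlarged space, at which point \autoref{thm:specific} guarantees that any strictly more specific candidate would have been found and adopted (the paper phrases this as a contradiction, you as a contrapositive, which is immaterial). Your closing caveats about reconciling the set-valued output of POL with the single predicate chosen by \textsc{MAX\_POL}, and about the shape loop, identify real informalities that the paper's proof silently glosses over, but they do not change the underlying argument.
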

\begin{proof}[Proof]
  Directly by contradiction and Theorem 3.1. Assume that the output solution \pcode{sol} is with size $(m,~n)$, and it is not the most specific one in size $(m',~n') = (m,~n) + \delta$.

 Because \pcode{sol} is the output, the search space is set to be $(m',~n')$ after the loop it is obtained. With Theorem 3.1 and the assumption, there is a solution \pcode{sol}$'$ in $(m',~n')$ that is more specific than \pcode{sol}, which is a contradiction with the output \pcode{sol}. Thus, \pcode{sol} is the most specific one in $(m',~n')$.
\end{proof}

\vspace{-5pt}

\section{Automated Heap Graph Generation via \ggen}
\label{sec:generator}

As presented, \tool requires input graphs to be provided explicitly.
The final technical contribution of this work is \ggen---an auxiliary
tool that allows one to \emph{automatically generate valid structure
  graphs} via programs that take them as inputs and are most likely
already available to the user.

Valid memory graphs can be obtained by extracting concrete
heap states from the program execution, provided concrete inputs, as,
\eg, done by the \sling tool~\cite{le2019sling}.
This approach is, however, only applicable if the program in question
\emph{generates} a data structure instance, and is problematic if it
\emph{expects} it as an input.
Our idea for automated graph generating is inspired by works on
fuzz-testing and uses a structure-expecting program as a
\emph{validator} for candidate graphs.
Given a program with assertions, \ggen generates \emph{arbitrary}
input memory graphs, subject to basic validity constraints (\eg, any
node should be reachable from its root), and keeps the graphs that
pass the assertions together with example facts summarised from those
graphs by traversing them and accumulating their payload in a set, so
they can be used as inputs for \tool.
The details are given in \autoref{alg:generateGraphs}.

One can argue that the generator defined this way can also effectively
produce \emph{negative} examples, thus removing the need for our
positive-only learning.
In practice, however, the number of the negative examples can be large
due to the randomness of the generator that has no knowledge of the
data structure (\cf~\autoref{sec:verification}). This makes it
impractical to use them for ILP-based synthesis that cannot
effectively discriminate good (\ie, informative) negative examples
from arbitrary junk.

\begin{algorithm}[t]
  \caption{Generating random memory graphs and examples}
    \label{alg:generateGraphs}
    \begin{algorithmic}[1]
    \Require $p$: Program with assertions as tests
    \Require $n$: Number of graphs to be generated
    % \Ensure $n$ memory graphs that passed the test
    \State Initialize $\mathit{GraphsAndExamples} \gets \emptyset$
    \For{$\mathit{sz} \gets 1$ to $max\_size$}
        \State $\mathit{cnt}$ = 0
        \While{$\mathit{cnt} < \lceil n/\mathit{max\_size} \rceil$}
        \State $\mathit{node}_1, \dots, \mathit{node}_{\mathit{sz}} \gets \mathit{init\_node}()$
        \State $\mathit{node}_i.key \gets \text{random(int)}$ for $i \in [1, \mathit{sz}]$
        \State $\mathit{node}_i.pointer \gets \text{random(node)}$ for $i \in [1, \mathit{sz}]$
        \If {$p(\mathit{node}_1, \dots, \mathit{node}_{\mathit{sz}})$ passes the tests}
        \State $\text{cnt} \gets \text{cnt} + 1$
        \State $\mathit{example} \gets \text{summarise\_graph}(node_1, \dots, node_{\mathit{sz}})$
        \State $\mathit{GraphsAndExamples}.\text{add}(\text{graph}(\mathit{node}_1, \dots, \mathit{node}_{\mathit{sz}}), \mathit{example})$
        \EndIf
        \EndWhile
    \EndFor
    
    \State \Return $\mathit{GraphsAndExamples}$
    \end{algorithmic}
\end{algorithm}

% !TEX root = ../paper.tex

\vspace{-5pt}

\section{Evaluation and Discussion}
\label{sec:evaluation}

We implemented \tool by extending \popper with the combined use of
Python ($\sim$200~lines for modifying \popper), \prolog (14 rules for
specificity of predicates, and 19 supported predicates for pure
relations in first-order theories), and ASP ($\sim$200~lines for SL domain knowledge, and $\sim$300~lines for \tool's search space, where 46 rules are used to encode the minimisation rules of the 19 pure relations).
 The prototype of \ggen is implemented by $\sim$50~lines of ASP (to generate the graphs) together with $\sim$100~lines of Python. 
All experiments were conducted on
an 8-core M1 MacBook Pro with 16GB RAM. The valid structure-specific
input memory graphs were either manually/LLM-written, or automatically
generated via \ggen as described in \autoref{sec:generator}.

% As we said in \autoref{sec:normalise}, the entailment of \tool is
% defined by ASP predicate, which might be incomplete. We find it
% effective in our experiment: considering the length of pure relation
% in SL predicates will not be very long, the entailment is enumerable
% within certain length. Here is how to enumerate when adding a new
% theory: if a synthesised predicate is found to be equivalent to a
% shorter one, the theory writer can simply add the corresponding
% entailment relation by an ASP constraint, and repeat until no
% redundancy appears.

% We first discuss the performance of \tool and relevant statistics, then its practical
% utility for program verification and synthesis; we outline its failure modes and possible future work in the end.

% \vspace{-5pt}

\subsection{Benchmarking Predicate Synthesis}
\label{sec:done}

To assess \tool's efficacy as a tool for heap predicate synthesis, our
evaluation addresses the following research questions:
\begin{enumerate}[label=\textbf{RQ 1.\arabic*},topsep=2pt,leftmargin=40pt]
\item\label{rq11} \emph{How effective is \tool in synthesising heap predicates?}
\item\label{rq12} \emph{What factors affect the synthesis efficiency?}
\item\label{rq13} \emph{How scalable is \tool with respect to the input size?}
\end{enumerate}

\subsubsection*{\ref{rq11}: Effectiveness and Expressiveness} 
We have assembled a set of benchmark for common and
complex heap-based data structures. \autoref{tab:predicates}
summarises our (mostly)\footnote{We will elaborate on the
  partially-successful binomial heap instance~\#15 in
  \autoref{sec:fail}.} successful case studies: 19~Separation~Logic predicates synthesised by \tool within the imposed 20~min
 time limit, which shows that \tool can effectively
produce SL predicates for a variety of heap-based data structures.

In our evaluation, we restricted the predicates in the search space to
feature at \emph{most one pure} theory, since for most predicates, the pure
relations on variables do not influence each other's validity. For
example, the \emph{AVL tree} predicate should feature (1)~a relation
between the heights of a node's subtrees and (2)~ordering relations on
the node payload; any one of these can be encoded independently,
leading to separate SL predicate definitions: \pcode{balanced(A, B)}
for \emph{balanced tree} and \pcode{bst(A, B)} for \emph{binary search
  tree}.
One can then \emph{merge} those two SL predicates with identical
spatial components into a united \emph{AVL tree} \pcode{avl(A, B, C)}
as follows:
\begin{enumerate}[topsep=2pt]
\item Merge the parameter lists by appending the pure variables;
\item Merge pure relations  with the overlapped variable renaming;
\item Adapt the recursion in the spatial parts for the new parameter.
\end{enumerate}

\begin{table}[t]
    
  \caption{Statistics on synthesised predicates and the comparison
    with other tool/setting. Columns following the predicate names are
    split by (1) the properties of the predicates: whether a predicate
    is invented (\ie, nested data structure, \underline{PI}), used
    pure relations \underline{Pure}, the size of the output predicate
    \underline{Size} (a triple of arity, the maximal number of
    literals, the maximal number of variables in the predicate); (2)
    the stats of \tool's synthesis: the percentage of run time taken
    by testing the hypotheses \underline{Test\%}, the synthesis time
    of finding the expected hypothesis \underline{T$_0$}, the
    synthesis time of exhaustive search in \textbf{P}ositive-only
    learning \underline{T$_\text{P}$}, 
    (3) the time \underline{T$_\text{I}$} it took to synthesise a
    result when using \popper as the classic \textbf{I}LP to
    synthesise, the time \underline{T$_\text{Io}$} when using classic
    \textbf{I}LP together with our \textbf{o}ptimisation to
    synthesise, and the time \underline{T$_\text{Pd}$} of running
    \textbf{P}OL with SL-based optimisations \textbf{d}isabled
    optimisations; finally, (4) we report the runtimes
    \underline{T$_\text{S}$} of \shape on the same tasks (but without pure relations). All times
    are in seconds, with TO for time-out, ERR for crashes, WA for wrong answers, and NA for not 
    supported in principle.}
\label{tab:predicates}
    \centering
    \begin{adjustbox}{width=0.98\textwidth} 
      
      % \footnotesize
      % \begin{adjustbox}{width=0.5\textwidth} 

        \begin{tabular}{cl@{\ }|ccc@{\ }|@{\ }ccc|ccc|c}
          \toprule
          \textbf{No.} &\textbf{Predicate}& \textbf{PI} & \textbf{Pure} & \textbf{Size} & \textbf{Test\%} & \textbf{T$_0$}  & \textbf{T$_\text{P}$} &\textbf{T$_\text{I}$} &\textbf{T$_\text{Io}$} &\textbf{T$_\text{Pd}$} & \textbf{T$_\text{S}$}  \\
          \midrule
          1 &\texttt{singly linked list (payload)}& no & set&       (2,8,5)  & 2\% & $<$1 & 1  & 4 & <1 & 3 & \multirow{2}{*}{{{<1}}}  \\
          2 &\texttt{singly linked list (length)}& no & int&        (2,9,5)  & 1\% & 1 & 9  & NA & NA & 4 &   \\
          3 &\texttt{singly linked list segment}& no & set&         (3,8,6)  & 5\% & $<$1 & 1  & TO & 1 & 31 & 1  \\
          4 &\texttt{doubly linked list}& no & set&                 (3,10,6) & 1\% & 2 & 3  & NA & NA & 339 & 1 \\
          5 &\texttt{doubly linked list segment}& no & set&         (5,10,8) & 1\% & 49 & 192 & TO & 10 &  TO & TO \\
          6 &\texttt{sorted singly linked list}& no & set&          (2,9,5)  & 3\% & $<$1 & 1  & NA & NA & 4 & NA \\
          7 &\texttt{sorted doubly linked list}& no & set&          (3,11,6) & 2\% & 1 & 3  & NA & NA & 231 & NA \\
          8 &\texttt{circular list}& no & set&                      (3,8,6)  & 23\% & $<$1 & 1  & 18 & <1 & 81 & <1  \\
          9 &\texttt{lasso list}& no & set&                         (3,8,6)  & 25\% & $<$1 & 1  & TO & 1 & TO & NA \\
          10 &\texttt{binary tree}&  no & set&                       (2,11,8) & 8\% & 2 &13 & TO & 2 & 245 & WA \\
          11 &\texttt{back-linked tree}&  no & set&                  (3,13,9) & 9\% & 33 & 176 & TO & 41 & TO & TO   \\
          12 &\texttt{binary search tree (set)}& no & set&                 (2,13,8) & 15\% & 11 &26 & NA & NA & TO & \multirow{2}{*}{{{NA}}} \\
          13 &\texttt{binary search tree (list)}& no & list&                (2,13,8) & 39\% & 33 &433  & NA & NA & TO  \\
          14 &\texttt{balanced tree}& no & int&                        (2,12,7) & 32\% & 34 & 458  & NA & NA & TO & NA \\
          15 &\texttt{binomial heap (order)}& no  & int&             (3,14,8) & 33\% & 37 & 1,087  & NA &NA & TO & NA \\
          16 &\texttt{binomial heap (payload)}& no & set&            (3,14,9) & 2\% & 59 & 101  & NA & NA & TO & NA \\
          17 &\texttt{list of lists (set)}& yes & set&                     (2,17,5) & 1\% & 9 &321  & TO & 33 & TO & \multirow{2}{*}{{{ERR}}} \\
          18 &\texttt{list of lists (list)}& yes & list&                      (2,17,5) & 1\%  &14 &759 & TO & 47 & TO  \\
          19 &\texttt{rose (n-ary) tree}& yes & set&                 (2,17,6) & 1\% &9 &749 & NA & NA & TO & ERR \\
          \bottomrule
          \end{tabular}%

  \end{adjustbox}
% {\scriptsize{
    
% }}
%
\end{table}

To show that our positive-only learning (in \autoref{sec:positive})
and the SL-based optimisations (in \autoref{sec:semantics}) are
effective, we compared the synthesis time of \tool with the
unoptimised classic ILP system \popper, \popper with the SL-based
optimisations, and \tool without the optimisations (from left to
right in the table).
The results show that (1)~our positive-only learning effectively
learns the predicates which classic ILP cannot discover, and (2)~our
optimisations help to reduce the synthesis time significantly for both
classic ILP and POL for most cases. 
The only exception is \#2, where the unoptimised \tool is faster, is
due to the small size of the predicate, where the cost of adding
constraints for reducing the search space is larger than the benefit
of the pruning.
%
%\autoref{sec:fail} discusses the failure scenarios.

We attempted to compare synthesis capabilities of \tool to those
of \shape~\cite{LPAR23:Learning_Data_Structure_Shapes}, the most closely
related work.
Like \tool, \shape synthesises SL predicates for data structures from
memory graphs, allowing for positive-only synthesis without negative
examples via meta-interpretive learning
(MIL)~\cite{muggleton2014meta}.
However, unlike \tool, \shape only allows one to synthesise a structure's shape
constraints, without any restriction on the data payload or other pure constraints--this is why three data structures (SLL, BST, and a list of lists) with different pure relations are joined in \autoref{tab:predicates} into one predicate per structure in the \shape column \textbf{T$_\text{S}$}.
It has been shown in the prior work that the MIL-based learning cannot
define a complete search space for general logic
programs~\cite{cropper2015logical}, which indicates that \shape
\emph{cannot be extended}, even in principle, to express arbitrary
data constraints, such as arithmetic ones.
That is, theoretically, 9 (those not marked as NA) out of 16 predicates (with the joining taken into the account) in \autoref{tab:predicates} can be synthesised by \shape. However, in
practice, the bugs in \shape's learning loop implementation resulted in either (1)~the search timing out (TO), (2)~\shape terminating with an error (ERR), or (3)~producing overfitted predicates, \eg, 4 clauses instead of 2 for BST (WA).
At the end, we were able to only synthesise 4 predicates with \shape.

% unexpected results (\ie, logic programs within weird formats) results that
% fail the assertion for synthesising nested data structures (even in
% their test suite), together with various bugs when experimenting with
% our benchmark (TO or WA in its column), which end up with only 4 basic
% predicates worked.
%
% \shape's fast runtime is also as expected because of its limited
% expressiveness. With those factors, it does not result in an
% interesting comparison.

% \todo{maybe, two examples of synthesised predicates?}

\subsubsection*{\ref{rq12}: Efficiency}
To understand what affects the synthesis efficiency, let us first look
into the case studies with long synthesis times (more than 5 minutes):
\#13 (BST with list payload), \#14 (balanced tree), \#15 (binomial
heap with order), \#18 (list of lists with list payload), and \#19
(rose tree with set payload).
All these case studies feature large predicate sizes, together with
either nested data structures or complex pure theory (integers or
lists). As witnessed by the last two columns of
\autoref{tab:predicates}, the time \tool takes to obtain the expected
predicate is less than 1 minute, and most runtime is spent on the
exhaustive search to give the guarantee of local completeness
(\cf~\autoref{thm:completeness}).
It is natural for nested data structures to take long time, because
the increment of search space is applied to both the
synthesis of the auxiliary predicate and the whole predicate.

We also wondered about how the used pure theories affect the
synthesis efficiency.
To answer that question, we compare two pairs of predicates (\#12 v.
\#13, \#17 v. \#18) using the same input memory graphs but different
pure theories: sets v. lists. The former is more efficient: this is
because the list is more expressive than set (\eg, a set union with
itself is eliminated, but appending a list to itself is producing a
new list), so the search space is larger after the redundancy
elimination. The same for integer theories: different
combinations of integer operations lead to large search space.

\subsubsection*{\ref{rq13}: Input and Scalability}
Our experiments demonstrate that 2-3 example graphs with tens of nodes
in total (less than 20 node per case study on average in our benchmark
suite) are sufficient to synthesise good predicates (we manually
assessed the results' quality).
We report the
fraction of time it took to test the hypotheses during the search.
%
% As for any symbolic machine learning method, the test time of ILP
% system does matter in many learning domains where large number of
% examples are provided as input.
%
In our case, it is not negligible for the examples with complex pure
relations (\eg, \#13-15).
In theory the test time should grow \emph{slower than linearly} with
the size of inputs. This is because only correct candidates require
traversing all nodes by the \prolog unification. Since most candidate
predicates do not satisfy all the examples, the testing is terminated
when \prolog reaches the node that falsifies the example.

\begin{figure}[!t]
  \centering  
  \begin{minipage}[t]{0.02\textwidth}
      \centering
      \begin{adjustbox}{width=0.8\textwidth}
        \rotatebox{90}{\small{\qquad\qquad\qquad Testing time (sec)}}
      \end{adjustbox}
  \end{minipage}
  \begin{minipage}[t]{0.45\textwidth}
      \centering
      \begin{adjustbox}{width=0.8\textwidth} 
      \begin{tikzpicture}
        \begin{axis}[
          xlabel={\large{Number of examples (100 nodes per graphs)}},
          grid=major,
          xmin=0, xmax=6,
          ymin=0, ymax=5,
          xtick={1,2,3,4,5},
          ytick={0,1,2,3,4,5},
          legend pos=north west,
          ]
          
          \addplot[line width=1pt, smooth, mark=*] coordinates {
            (1, 0.13)
            (2, 0.21)
            (3, 0.27)
            (4, 0.36)
            (5, 0.42)
          };
          
          \addplot[line width=1pt, smooth, mark=triangle, color=red] coordinates {
            (1, 0.18)
            (2, 0.33)
            (3, 0.50)
            (4, 0.46)
            (5, 0.55)
          };
    
          \addplot[line width=1pt, smooth, mark=diamond, color=green] coordinates {
            (1, 0.74)
            (2, 1.33)
            (3, 0.56)
            (4, 0.96)
            (5, 0.78)
          };
          
          \addplot[line width=1pt, smooth, mark=square, color=blue] coordinates {
            (1, 1.41)
            (2, 2.21)
            (3, 3.01)
            (4, 4.28)
            (5, 4.78)
          };
    
          \legend{SLL (len), Binary tree, Back-link tree, Lasso }
        \end{axis}
      \end{tikzpicture}
    \end{adjustbox}
  \end{minipage}
  \begin{minipage}[t]{0.44\textwidth}
    \centering
    \begin{adjustbox}{width=0.8\textwidth}
      \begin{tikzpicture}
        \begin{axis}[
          xlabel={\large{Nodes per graph (with 5 graphs)}},
          grid=major,
          xmin=0, xmax=120,
          ymin=0, ymax=5,
          xtick={20,40,60,80,100},
          ytick={0,1,2,3,4,5},
          legend pos=north west,
          ]
          
          \addplot[line width=1pt, smooth, mark=*] coordinates {
            (20, 0.08)
            (40, 0.13)
            (60, 0.21)
            (80, 0.31)
            (100, 0.42)
          };
          
          \addplot[line width=1pt, smooth, mark=triangle, color=red] coordinates {
            (20, 0.11)
            (40, 0.25)
            (60, 0.41)
            (80, 0.55)
            (100, 0.55)
          };
    
          \addplot[line width=1pt, smooth, mark=diamond, color=green] coordinates {
            (20, 1.01)
            (40, 1.67)
            (60, 0.65)
            (80, 1.17)
            (100, 0.78)
          };
          
          \addplot[line width=1pt, smooth, mark=square, color=blue] coordinates {
            (20, 0.31)
            (40, 0.63)
            (60, 1.46)
            (80, 2.69)
            (100, 4.78)
          };
    
          \legend{SLL (len), Binary tree, Back-link tree, Lasso }
        \end{axis}
      \end{tikzpicture}
    \end{adjustbox}    
\end{minipage}
  % \belowcaptionskip=-2pt
%  \abovecaptionskip=3pt
\caption{Testing time  with different input graphs (left); and with different number
    of nodes per graph (right).}
    \label{fig:trend}
\end{figure}
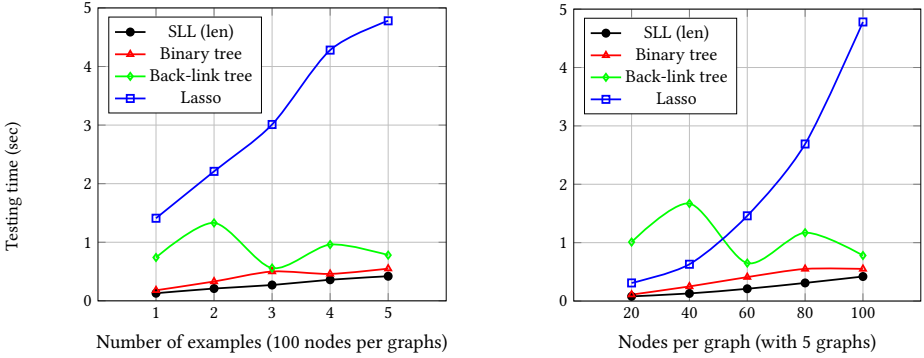

To show scalability of the tool \wrt the input size, 
\autoref{fig:trend} shows the testing time by graph number with
100~nodes per graph on its left, and the testing time by
node number with 5~graphs on its right. The general trends are as expected: the testing time
grows slower than linear in the graph number, and grows linearly in the node number (the percentage of nodes traversed in the examples should be constant with
the same topology of the graph, thus linear). 

Below we discuss the outliers. 
The reader can notice the substantial difference between
the testing time of the lasso list case study and every other example.
The reason for that is our current implementation of the linearity
check, which is not optimised for \emph{circular} data structures. 
Note that the testing of SL predicates consists of two parts: the
 \prolog validity check and the linearity check (in
\autoref{sec:sldomain}), we find that the all but circular data
structures' linearity checking time is negligible. This inefficiency
also explains the faster-than-linear growth of testing times in
\autoref{fig:trend}, and should be solvable by integrating the
checking into \prolog's SLD
resolution~\cite{DBLP:conf/ifip/Kowalski74} (left as
future work).
Finally, the testing time of back tree is not even strictly
increasing. The reason is: providing more or larger
examples makes it possible to prune
earlier, thereby reducing the overall synthesis time.
%
% For example, the back tree
% example with 5$*$100 nodes is synthesised in 77s, which is much faster
% than in our basic benchmark (176s).

% It is almost
% consistent with our experiments with both \emph{graph size} and
% \emph{example numbers} changed.
% %
% Notable outliers are the \emph{circular} data structures (\#8 and~9):
% due to the mechanism of \prolog unification, whose graph size has
% significant impact on testing times that degrade worse than linearly.
% \autoref{fig:trend} provides the detailed statistics and discussion
% on these trends.
%

% \vspace{-5pt}

\subsection{The Utility of \tool}
\label{sec:utility}

\subsubsection{Verification}
\label{sec:verification}

Let us demonstrate how the combination of \tool and \ggen facilitates
deductive program verification in SL-based provers. 
Specifically, our goal is to streamline the task of writing inductive
predicates, by instead generating them automatically from programs to
be verified.
We are interested in the following research questions \wrt the
effectiveness of the approach:

\begin{enumerate}[label=\textbf{RQ 2.\arabic*},topsep=2pt,leftmargin=40pt]
\item\label{rq21} \emph{How much human effort is required to infer the
    predicates?}
  \item\label{rq22} \emph{How effective is \ggen for producing
      positive examples?}
  \item\label{rq23} \emph{Are the inferred predicates the same as the expected
      (human-written) ones, so they can be used directly for the
      verification?}
  \end{enumerate}

  \begin{table}[t]
    \centering
    \caption{Statistics on generating input memory graphs for different
      predicates: the lines of codes of the \code{assert}-annotated
      function or its test, the number of graph instances
      (capped by 100) generated in 10 minutes, the number of asserted
      conjuncts, 
      the ratio of the number of valid graphs to all randomly
      generated graphs, and whether the original program is verified
      with the synthesised predicates against its original
      specification. 
      For \veri and \grass, \textbf{Y} means verified; we did not manage
      to compile \vcdryad, so \textbf{=} means equivalent to the
      expected predicate, and \textbf{?} means not equivalent and
      couldn't be checked.
  %
  % \todo{Rearrange columns: put Num right before Ratio, add a column for
  %   extra text LOC. I don't understand what is the purpose of mentioning
  % grayed programs if you require tests.}
    }
    \label{tab:gen}
    % \begin{adjustbox} {width=0.48\textwidth}
  {\small{
      \begin{tabular}{l|c|ccccc|c}
        \toprule
        Predicate &Programs &  LOC$_\text{prog}$ & LOC$_\text{test}$ & Assert & Num &Ratio & Verified?\\
        \midrule
        \multirow{2}{*}{{{Singly Linked List}}} & concat\textsuperscript{2} &  10& - & 0& 100 &100\%  & \textbf{Y} \\
        &reverse\textsuperscript{2} &  10& - & 0& 100 &100\%  & \textbf{Y} \\
        \midrule
        \multirow{4}{*}{{{Sorted List}}} & find\textsuperscript{1} &  10& - & 1& 100 &14.4\% & \textbf{=} \\
        &insert\_iter\textsuperscript{1} &  28& - & 2 & 100 &13.3\% & \textbf{=} \\
        &copy\textsuperscript{2} &  26& - & 2 & 100 &13\% & \textbf{Y} \\
        &double\_all\textsuperscript{2} &  25& - & 2 & 100 &12.1\% & \textbf{Y} \\
        \midrule
        \multirow{4}{*}{{{Doubly Linked List}}} &append\textsuperscript{1} &  11& - & 2 &47 & 2.9\% &  \textbf{=} \\
        &dispose\textsuperscript{2} &  9& - & 2 &41 & 2.5\% & \textbf{Y} \\
        &reverse\textsuperscript{3} &  16& - & 2 &47 & 2.9\% & \textbf{Y} \\
        &\graybox{insert\_front\textsuperscript{1}} &  - &10&  2 &47 & 2.9\% & \textbf{=} \\
        \midrule
        % {Circular list} &{N.A.} & \multicolumn{4}{|c}{N.A.} & \textbf{Y} \\
        % \midrule
        \multirow{4}{*}{{{Binary Search Tree}}} & find\textsuperscript{1}  & 16& - & 3& 100 & 11.5\% & \textbf{=} \\
        &insert\textsuperscript{1} &  22& - & 6 &100 & 8.0\% & \textbf{=} \\
        &free\textsuperscript{3} &  11& - & 3 &100 & 7.5\% & \textbf{Y} \\
        &remove\textsuperscript{3} &  43& - & 3 &100 & 12.3\% & \textbf{Y} \\
        \midrule
        \multirow{2}{*}{{{Binomial Heap (order)}}} &\graybox{find\_min\textsuperscript{1}} &  \multirow{2}{*}{- }& \multirow{2}{*}{26} &\multirow{2}{*}{5} &\multirow{2}{*}{18} & \multirow{2}{*}{1.1\%} & \multirow{2}{*}{\textbf{?}} \\
        &\graybox{merge\textsuperscript{1}} &  & &  &  & &  \\
        \bottomrule
              \multicolumn{8}{l}{}
        \\[-8pt]
        \multicolumn{8}{l}{{\textsuperscript{1} From~\vcdryad~\cite{vcdryad}\quad \quad \quad
        \textsuperscript{2} From~\grass~\cite{Piskac-al:TACAS14}\quad \quad \quad
        \textsuperscript{3} From~\veri~\cite{Jacobs-al:NFM11}}}
      \end{tabular}
  }}
    % \end{adjustbox}
  \end{table}
\subsubsection*{\ref{rq21}: Required Human Effort}

For this experiment, we adopted the case studies from benchmark suites
of three different deductive
verifiers~\cite{vcdryad,Piskac-al:TACAS14,Jacobs-al:NFM11}, containing
heap-manipulating programs for different linked data structures, many
of which come with non-trivial data constraints.
Our aim is (1)~to quantify the human effort for annotating selected
program(s) with assertions as oracles for graph generation, and (2)~to
confirm that \ggen can produce good-quality graphs for \tool to
synthesise the expected predicates within a reasonable time limit (10
min). For simplicity, we only consider the graphs without cycles
unless being specified (doubly linked list in our case study).

\autoref{tab:gen} shows the statistics for generating up to 100 valid
graphs, with up to five nodes, within the time limit of 10 minutes, to
infer data structure predicates.
In all cases, between 0 and~5 simple assertions (\ie, one single
comparison for most cases, except for BST, one of whose assertions is
expressed by a comparison function) are enough for capturing the properties
(\eg, line 7 in \autoref{fig:srtl}).

\subsubsection*{\ref{rq22}: Effectiveness of the Graph Generator}

As shown in the table, in each experiment \ggen produced at least 18
valid graphs, which is sufficient for \tool to infer the expected
predicates. Unsurprisingly, the throughput of the generator (\ie,
the Num column in \autoref{tab:gen}) correlates with the
complexity of the predicates: \eg, the binomial heap
instance needs to satisfy not only the order relation between the
nodes but also the heap property, which results in low chances
for the generator to produce valid instances; generally, the throughput is similar (ranged by randomness)
 across programs manipulating the same structure.
As expected, the fraction of valid graphs \wrt all randomly generated
can be very low for complex predicates, which shows the importance of
positive-only learning, as large numbers of trivially invalid graphs
would slow down the synthesiser.

% \todo{I don't understand the English text below. Please, use an
%   example: figure, code snippet, etc, and then explain an issue with
%   it. If some programs requres extra test, reflect that in the table
%   in a separate column, showing both LOC of the program and the test,
%   and ignoring the assertions. Furthermore, I don't understand the use
% case for programs like merge then---did you use them for anything?}
%
Even though in principle \ggen could be used for any programs,
we found that writing assertions for certain functions
(\graybox{greyboxed} in \autoref{tab:gen}) is not an effective way to
generate the valid instances. 
This is because the annotated node might simply not traverse ``enough''
of the structure to perform the validation.
As an example, consider inserting a node to the
front of a doubly linked list:
\begin{minted}[fontsize=\footnotesize,linenos,numbersep=-10pt]{c}
    DLNode * insert_front(DLNode * x, int k) {
      if (x == NULL) {
        DLNode * head = (DLNode *) malloc(sizeof(DLNode));
        head->key = k;
        head->next = NULL;
        head->prev = NULL;
        return head;
      } else {
        if(x->next != NULL) assert(x->next->prev == x);
        DLNode * head = (DLNode *) malloc(sizeof(DLNode));
        head->key = k;
        head->next = x;
        x->prev = head;
        return head; 
      } 
    }
\end{minted}
Testing the following incorrect example of a DLL heap graph, produced
by \ggen, will not violate the assertion at line~9 of the code above.
\begin{minted}[fontsize=\small]{prolog}
  next(n1,n2).   next(n2,n3). next(n3,null).
  prev(n1,null). prev(n2,n1). prev(n3,null).
\end{minted}
The reason is: the assertion at line~9 is only checked for node
\pcode{n1}, but the offending node \pcode{n2} is not checked because
the function does not traverse the whole list.
As a solution, 
in cases when no functions manipulating with the structure traverse
the whole structure graph (so their LOCs are not informative, hence
``-'' in \autoref{tab:gen}), one can instead write a standalone
\emph{traversal function} to be used as an oracle. Sizes of those
additional functions are shown as \text{LOC$_\text{test}$} in
\autoref{tab:gen}.
%
% We call them improper because the human effort is more than
% just writing assertions.

We conclude that \ggen is a useful front-end to \tool, but its
effectiveness depends on the ``thoroughness'' of the structure
traversal done by the function that is used as an oracle.

\begin{figure}[b]
  \begin{minted}[fontsize=\footnotesize,linenos,numbersep=-10pt]{c}
      int sorted_find(SNnode * l, int k){
        if (l == NULL) {
          return -1;
        } else if (l->key == k) {
          return 1;
        } else {
          if (l->next != NULL) assert(l->key <= l->next->key);
          int res = sorted_find(l->next, k);
          return res;
        } }
  \end{minted}
  
  \vspace{5pt}  
  
  \begin{minted}[fontsize=\small]{c}
  define pred sorted^(a): 
   ((a l= nil) & emp) | 
   ((a |-> loc next: c; int key: e) * sorted^(c) & (e lt-set keys^(c)))
  \end{minted}
  %\abovecaptionskip=5pt
  % \belowcaptionskip=-5pt
  \caption{An example of the input program with assertions and
    inferred predicate for sorted list in \vcdryad.}
    \label{fig:srtl}
    \end{figure}

\subsubsection*{\ref{rq23}: Quality of Inferred Predicates}

With memory graphs obtained automatically, we synthesised the
respective predicates with \tool and translated them into the syntax
of the corresponding SL-based verifiers (automatically or manually,
based on how complex the concrete verifier's language is).
Next, we verified the original programs with the inferred predicates,
thus, demonstrating that the inferred predicates are equivalent to the
expected ones.
An inferred sorted list predicate of \vcdryad, corresponding to the
example from \autoref{sec:popper}, is shown in \autoref{fig:srtl}. The
only failing case is the binomial heap with order constraints because
of the limitation of pre-defined predicates in \tool (further
explained in \autoref{sec:fail}), where the synthesised predicates are
partially correct but not strong enough, and need to be refined by
manually adding the missing constraints.

To summarise, we found the combination of \ggen/\tool effective for
automatically producing SL predicates equivalent to human-written ones
from either modestly-annotated programs to be verified, or with a help
of a simple human-written traversal procedure for the data structure.

% the experience to infer SL predicates from given programs
% using \tool the users' efforts to write the predicates, and the
% generator is effective to obtain the instances for the predicates.
% 
% Considering the diversity of SL dialects
% \cite{Appel:ESOP11,DBLP:phd/ethos/Tuerk11,Mueller-al:VMCAI16}, we
% believe that \tool together with the two memory graph extractors can be easily customised to synthesise predicates
% for different SL encodings, then improve the usability of existing
% SL-based tools.

  \begin{figure}[!t]
    \centering  
    \begin{minipage}[b]{0.56\textwidth}
        \centering
        {\footnotesize{
        
    \begin{tabular}{c| c|c|c| c}
    \toprule
    No. & Category & Program & Code/Spec & Time    \\	  
    \midrule
    1 & \multirow{4}{*}{{{Deallocate}}} & sll & 5.5x & 0.2s\\
    2 & & bst & 8.0x & 0.2s\\
    3 & & dll\_seg & 2.4x & 0.2s\\
    4 & & {multilist} & 16.0x & 0.3s\\
    \midrule
    5 &  \multirow{3}{*}{{{Copy}}} & lseg & 2.0x & 0.8s\\
    6 & & bst & 3.5x & 3.3s\\
    7 & & balanced tree & 3.0x & 1.9s\\
    \midrule 
    8 & \multirow{2}{*}{{{Size}}} & sll\_len & 2.1x & 0.4s  \\
    9 & & balanced tree & 3.5x & 0.6s  \\
    \midrule 
    10 & \multirow{8}{*}{{{Transform}}} & sll $\rightarrow$ dlseg & 2.5x & 0.4s  \\
    11 & & {srt\_dll $\rightarrow$ sll} & 3.1x & 7.4s  \\
    12 & & {dll $\rightarrow$ bst} & 15.0x & 42.8s  \\
    13 & & {btree $\rightarrow$ bktree} & 13.6x & 11.8s  \\
    14 & & {multilist $\xrightarrow{}$ sll} & 5.0x & 8.8s  \\
    15 & & {btree $\xrightarrow{}$ dll} & 9.6x & 7.1s  \\
    16 & & {bst $\xrightarrow{}$ srtl} & 11.6x & 10.3s  \\
    17 & & {dll $\xrightarrow{}$ srt\_dll} & 7.3x & 9.3s  \\
      \bottomrule
    \end{tabular}
    
    % & btree $\rightarrow$ dll ? & 1.4x & 0.4  \\

        }}
      \caption{Example programs synthesised by \suslik from SL
        specifications stated using predicates produced by \tool.}

      \label{tab:results}
    \end{minipage}
    \hfill
    \begin{minipage}[b]{0.4\textwidth}
      \centering
      
  \begin{minted}[fontsize=\scriptsize]{c}
  // pre:  {f :-> x ** sorted_dll(x, z, s)}
  // post: {f :-> y ** sll(y, s)}
  
  void srt_dll_to_sll (loc f) {
  loc x1 = READ_LOC(f, 0);
  if (x1 == 0)  {
    WRITE_INT(f, 0, 0);
    return;
  } else {
    int vx11 = READ_INT(x1, 0);
    loc nxtx11 = READ_LOC(x1, 1);
    loc z1 = READ_LOC(x1, 2);
    WRITE_LOC(f, 0, nxtx11);
    srt_dll_to_sll(f);
    loc y11 = READ_LOC(f, 0);
    loc y2 = (loc)malloc(2 * sizeof(loc));
    free(x1);
    WRITE_LOC(f, 0, y2);
    WRITE_LOC(y2, 1, y11);
    WRITE_INT(y2, 0, (int)vx11);
    return;
  }}
  \end{minted}
  \caption{An example \suslik output (\#{11}): a C program for
    converting a sorted DLL to SLL. \texttt{loc}, \texttt{READ\_LOC},
    \etc are macro-definitions around ordinary C types and operations. }
          \label{fig:transform}
  \end{minipage}
  \end{figure}

\subsubsection{Deductive Synthesis}
\label{sec:synthesis}

As another demonstration of \tool's utility, we employed the
synthesised SL predicates to automatically generate
correct-by-construction heap-manipulating programs in~C using a
state-of-the-art deductive
synthesiser~\suslik~\cite{polikarpova2019structuring,WatanabeGPPS21}.
The goal of this exercise was to demonstrate that, one can use \tool
together with \suslik to obtain \emph{provably correct}
implementations of structure-specific procedures for copying,
computing their size, and transformation without knowing how to
specify SL predicates, but using heap graphs. Our case study includes
17 synthesis tasks involving the predicates
from~\autoref{tab:predicates}, producing programs \emph{not} featured
in any past works on \suslik.
The average code/spec AST size ratio is 4.1, and the average \suslik
synthesis time is 6.2 sec, which shows that \tool produces predicates
that are \emph{immediately suitable} for proof-driven synthesis.
\autoref{tab:results} provides the detailed statistics. An example of
the synthesised C program that transforms a sorted DLL into a singly
linked list is given in \autoref{fig:transform}.

Compared to the existing example-based heap-manipulating program
synthesisers, \spt~\cite{singh2012spt} and
\synbad~\cite{DBLP:conf/sas/Roy13}, the joint \tool/\suslik workflow
does not require ``fold/unfold'' functions (as does \spt) or a
template (as needed by \synbad) for the intended programs.

% \begin{table}[!t]
%   % \setlength{\abovecaptionskip}{5pt}
%     \centering  
%     \caption{Example programs synthesised by \suslik from SL
%     specifications stated using \tool's outputs.}
%   \label{tab:results}
%         {\small{
%         \input{figure/table_suslik.tex}
%         }}
      
%   % \setlength{\abovecaptionskip}{5pt}
%   % \setlength{\belowcaptionskip}{-10pt}
  
%   \end{table}

% \subsubsection{Other Potential Applications}

% Though not directly evaluated in this work, \tool naturally inherits
% the application domain of other shape analysis tools. 
% %
% Program comprehension and programming education can benefit from the
% generated predicates, as they can be used to
% explain~\cite{marron2012abstracting} or visualise the structures and
% their manipulations~\cite{DBLP:conf/iwpc/BoockmannL22}. 

% \vspace{-5pt}

\subsection{Failure Modes and Future Work}
\label{sec:fail}

In its current version, \tool failed to synthesise predicates for
several intricate linked structures. The reasons for the failed tasks fall into one of the following three categories:

\begin{enumerate}[leftmargin=*,topsep=2pt]
\item \tool's default settings cannot fully capture the structure's
  properties. Consider \#15 from \autoref{tab:predicates}: the root
  and leaf nodes of binomial trees have different order relation with
  their siblings, but our search space cannot express this
  distinction, so the synthesised predicate is \emph{the best in this
    setting} (capturing nodes' ordering \wrt their children) but not
  the expected one.

\item Nested data structures with multiple arguments or complex pure
  relations, For example, the needed search space of the \emph{braced list
    segment} predicate~\cite{Reynolds08} is too large to be fully
  explored within our time limits.

\item An instance of a predicate cannot be proven by top-\emph{down
  evaluation}, so that \prolog cannot evaluate them as expected in SL.
%  (\cf the explanation below).
\end{enumerate}

\noindent
In the first case, the solution is naturally to allow richer search space: we might either extend the search space with more predicates (\eg, judging a node is root or not), or enabling larger parameters in \autoref{sec:default} to enrich the expressiveness of \tool.
This is in line with a common synthesis trade-off between the
expressiveness and the efficiency; we leave it a future work to find general
search space settings to enrich the expressiveness without much loss
of performance.

In the second case, it is possible to optimise the synthesis of nested
predicates using problem-specific knowledge. For instance, we can
assume the inner data structure is not mutually recursive (as it is in
the case of the braced list segment), reducing the search space by
splitting the synthesis of the auxiliary predicate and the whole
predicate. We leave this optimisation to the future work.

To explain the last issue, consider the following \prolog predicate:
\begin{minted}[fontsize=\small]{prolog}
  p(X, Y) :- X == Y.
  p(X, Y) :- next(X, Z), p(Z, Y), p(X, Z).
\end{minted}
In plain words: if \pcode{X} and \pcode{Y} have the same location,
then \pcode{p(X, Y)} is true; if not, then the \pcode{p(X, Y)} holds
if the both segments \pcode{p(X, Z)} and \pcode{p(Z, Y)} are true,
where \pcode{Z} is the next node of \pcode{X}.
This predicate is valid in Separation Logic, but \prolog rejects it,
because of its validity checking algorithm.
To understand the difference that causes the rejection, consider the
literal \pcode{p(a, b)}, in the case when \pcode{next(a, b)} is a fact
that holds.
In SL, the literal is true, because it is checked in a
\emph{bottom-up} way, \ie, ``whether the predicate is consistent if it is
true''. Therefore, \pcode{p(a, b)} holds because it is consistent with \pcode{next(a, b), p(b, b), p(a, b)}.
However, the test of \pcode{p(a, b)} in \prolog is done in a top-down way,
\ie, ``whether there is a variable unification that makes
\pcode{next(a, Z), p(Z, b), p(a, Z)} true'', which triggers a recursive
test on the inner  \pcode{p(a, b)}. It will be interesting to see whether replacing \prolog with the solvers from SL-COMP \cite{DBLP:conf/tacas/SighireanuPRGIR19} can directly solve it without other overhead.

\subsection{Why not just use Large Language Models?}

Though \tool has non-negligible runtime for synthesising complex
predicates, its completeness guarantees
(\cf~\autoref{thm:completeness}) ensure that the synthesised
predicates are the best (\ie, the most specific ones) in the
respective search space.
Large Language Models (LLMs), as a powerful tool for learning, have
been used extensively in the recent works for synthesising
specifications \cite{wen2024enchanting,ma2024specgen}, with faster
runtime but without completeness guarantees.
One may wonder: why not just ask an LLM to synthesise a specification,
and use \prolog to test it against the provided examples, mimicking the
loop of \autoref{alg:popper}?
To assess whether our synthesiser with proven completeness guarantees
provides better solutions compared to a state-of-the-art LLM, we pose
SL predicate synthesis as a task for the latter by designing a
detailed prompt outlining our intentions, followed by a series of
queries with inputs similar to what is required by \tool (\ie,
positive examples).
%
%This is done in the following two phase.

% Can LLMs synthesise the heap predicates? We answer this question by
% initial experiments that (1) simply prompt by providing one initial
% example for LLM to learn and (2) try synthesising more complex
% predicates with the same input as \tool.

\subsubsection*{Phase 1: Simple Prompt for Learning}

Before the synthesis, we provide a detailed prompt to an LLMs as
outlinining the required background knowledge, which includes the
following parts:

\begin{enumerate}
\item The task: synthesising SL predicates in \prolog for linked heap
  structures given the graphs.
\item An example of the predicate ``sll'' for singly-linked list and
  its graphs, with the explanation. 
\item Other synthesis settings,
  such as the predicates
  that can be used for pure constraints, the option to invent
  auxiliary predicates, and the requirements on the size of the
  results. 
\end{enumerate}

\subsubsection*{Phase 2: Synthesising Complex Predicates}

A synthesis prompt for each task is given via the following template,
along with graphs and positive examples in the same format as taken by
\tool:

\begin{verbatim}
  For the next task, here are the graphs: (Graphs)
  And here are the positive examples: (Positive examples)
  Please, synthesise the predicate.
\end{verbatim}

\noindent
Our experiments were done on latest ChatGPT-4o.\footnote{The
  conversation snapshot is available
  at~\url{https://chatgpt.com/share/66f266a5-0338-8006-8bb9-1ef61c33d437}.}
%
% , whose detailed
% conversation is provided in supplementary material. 
%
Below, we summarise the outcomes.
\begin{enumerate}
\item We noticed that LLMs can correctly synthesise the predicates for
  simple cases (\eg, doubly-linked list), but it fails to synthesise
  predicates for more structures with non-trivial constraints (\eg, binary search
  trees and balanced tree): its result often don't type check or miss
  constraints.
\item Unsurprisingly, an LLM benefitted from the predicate names we
  provided to it. For example, with providing the name \code{rose_tree}
  in the prompt, the output predicates are mostly correct.
\item As LLMs have quick turnaround compared to running \tool, one
  promising direction is to use
  LLMs for the initial exploration of the search space and then use
  \tool to refine the results. 
\end{enumerate}

\noindent
We conclude that LLMs can be used for synthesising simple predicates,
and have a potential accelerate the synthesis process by deriving
plausible candidates, which can be checked by \prolog and, possibly,
repaired.
That said, completeness guarantees of \tool provide tangible benefits,
allowing it derive correct solutions for complex examples, which an
LLM failed to discover.

\section{Related Work}
\label{sec:related}

%\vspace{-5pt}

\paragraph{Learning Data Structure Invariants} 
%
%
%
% On the other hand, \tool requires additional inputs for its synthesis, such as the pure value of the positive examples, which is our trade-off between expressivity and simplicity.
%

Other than \shape~\cite{LPAR23:Learning_Data_Structure_Shapes},
discussed extensively in \autoref{sec:done}, earlier work on shape
analysis also used inductive synthesis to generate shape
predicates~\cite{guo2007shape}, but the input of the synthesis
framework is a program that constructs the data structure instance,
providing more information (\eg, the recursion structure) compared to
memory graphs.
Similar to \shape, that work only considers the shape relation without
the data properties. \tname{DOrder}~\cite{zhu2016automatically} and
\tname{Evospex}~\cite{molina2021evospex} are two later works on
learning the data invariants from the constructors of the data
structures (in OCaml or Java).
\tname{Locust}~\cite{brockschmidt2017learning} infers shape predicates
from pre-defined definitions with statistical machine learning, with
no completeness guarantees. The work by
\citeauthor{molina2019training}~\cite{molina2019training} describes a deep learning-based
framework that implements a binary classifier for the data structure
invariants; unlike our work, it does not
provide logical descriptions of data structures but merely tells valid
structure instances from invalid ones. Though more machine learning
methods \cite{DBLP:conf/spin/UsmanWWYDK19} have shown to be effective
in learning data structure, the training data is required to be large
and diverse, which is not always available in practise.
\citet{DBLP:phd/basesearch/Dohrau22} describes a black-box approach to
infer SL specifications and predicates from programs based on ICE
learning~\cite{garg2014ice}, while it can neither deal with, nor be
easily extensible to nested data structures. \tname{SLING} is a
framework to infer program specifications in Separation Logic from
memory graphs~\cite{le2019sling}; it does not infer new heap
predicates and instead offers a number of pre-defined structure
shapes, where \tool can work as a complementary tool to help the user
to pre-define new predicates.

% %
% \tname{Hanoi}~\cite{miltner2020data} is another framework for data
% structure invariant synthesis.
% %
% Unlike \tool, which targets inference of shape and data descriptions
% in Separation Logic, \tname{Hanoi}, given a data structure
% implementation as self-contained module, infers properties satisfied
% by different \emph{compositions} of the functions defined within this module.

% We believe that other decidable logics for describing heap-based
% structures could be also used as a target language for our synthesis
% mechanism.
% %
% In the future, we are planning to generalise our inference mechanism
% to other formalisms, such as
% \tname{DRYAD}$\!_{\mathit{dec}}$~\cite{qiu2019decidable}, a decidable
% logic for tree-liked structures, which provides categorised functions,
% such as set functions, measure functions, \etc.
% %
% This will require us to extend \popper with a feature of categorising
% predicates.

% \todo{maybe ILP section, especially Mayur's works}

\paragraph{Synthesising Declarative Representations}

The approach of \tool extends two lines of work on synthesis of
declarative representations programs and data.
The first one is inductive logic programming (ILP), which aims to
learn logic programs from examples.
\tname{Progol}~\cite{muggleton1997learning} is an early notable ILP
system that achieves positive-only learning by Bayesian framework,
which is not sound in general. Importantly, \tname{Progol} does not
support learning recursive logic programs.
\tname{AMIE}~\cite{galarraga2013amie} is another knowledge rule-mining
framework with positive-only examples, but the learning is in
\tname{AMIE} mainly targeted knowledge base graphs, so the learned
rules were not as complex as in ILP settings.
Other than those conventional ILP methods that synthesise \prolog
programs, significant progress has been made recently on synthesising
\tname{Datalog}~\cite{thakkar2021example,10.1145/3622847} and ASP
programs~\cite{DBLP:conf/jelia/LawRB14,DBLP:conf/aaai/LawRBB020} from
examples.
While the syntax of Separation Logic assertions and predicates can be
expressed in the \tname{Datalog} or ASP domain, the approaches
developed in the past efforts are not immediately applicable, as they:
(1)~may require negative examples (in the case of \tname{Datalog}
synthesisers), and (2)~limit the pre-defined predicates expressed only
by grounded facts.
In particular, the latter means that the predicates used in the
synthesis cannot express arbitrary operations, because grounded facts
are fully instantiated and do not contain variables. 
This limitation makes impossible the representation of general rules
or operations that can be applied to a range of inputs. For example, a
grounded fact can state that a specific element belongs to a set, but
it cannot express a general rule for membership that applies to
\emph{any} element. 
At the same time, in \tool, the predicates are written in \prolog, a
Turing-complete language, which enables definitions of operations like
list append, set union, \etc,---a feature our approach has directly
inherited from \popper~\cite{cropper2021learning}.

The second relevant line of work is \emph{specification synthesis},
which aims at synthesising formulas within various but pre-defined
domains. 
Our graph generator (\autoref{sec:generator}) follows the ideas of
\precis~\cite{DBLP:journals/pacmpl/AstorgaSDWMX21}, which similarly
uses test cases as a learning oracle to generate positive example
for synthesising program contracts. 
At the same time, the formal guarantees provided by positive only
learning--generating all non-comparable and most specific predicates,
are similar to those of \tname{Spyro}~\cite{park2023synthesizing},
which defines a general framework for synthesising specifications for
customisable domains. The main difference between \tool and those
works is that \tool synthesises \emph{predicates} that can contain
\emph{recursive definitions}---an aspect cannot be handled by the
existing specification synthesisers.

The overlap between two lines of work above is the notion of
\emph{least general generalisation}~(LGG)~\cite{plotkin1970note}. The
example-based specification synthesisers can be considered as learning
the LGG of the examples, which is exactly what early bottom-up ILP
systems do. The limitation of existing LGG operations is well-known in
ILP~\cite{CropperD22}: there is no LGG operations for recursive logic
programs, which is the reason why modern ILP systems are defined in a
top-down fashion.

\paragraph{Answer Set Programming v. Satisfiability Modulo Theories}

ASP plays a crucial role in our work, similar to that of SMT solvers
most contemporary synthesis tools. 
As mentioned by \citeauthor{bembenek2023smt}~\cite{bembenek2023smt},
ASP is effective at search tasks involving fixpoints: pruning in \tool
can be encoded easily with recursive logic predicates, some of which
though can be expressed in SMT, need to be encoded in a more complex
and hard-to-understand way.
Another advantage of ASP is its efficiency when enumerating \emph{all}
solutions in a search space, which is crucial for exhaustively exploring
the space of SL predicates. 
In contrast, in most state-of-the-art SMT solvers, only one model is
returned at a time, so for obtaining a complete set of models, the
user must either block an obtained model and re-run the solver to get
the next one with high overhead, or use expert-level
techniques~\cite{bjorner2022user}.
On the other hand, SMT solvers usually come with a rich set of
theories, whereas ASP modulo theories is still limited to basic
theories like difference logic
\cite{DBLP:journals/tplp/JanhunenKOSWS17,DBLP:journals/algorithms/RajaratnamSWCLS23}
and acyclicity constraints \cite{DBLP:conf/lpnmr/BomansonGJKS15}.

\section{Conclusion}
\label{sec:conclusion}

We presented the first approach for synthesising property-rich
inductive predicates for data structures in Separation Logic (SL) from
concrete heap graph examples, by positive-only learning via Answer Set
Programming, with SL-based pruning.
%
% The essence of our approach is to extend the classical inductive logic
% programming by (1) supporting positive-only learning and (2) tune the search
% for SL predicates, both with the help of Answer Set Programming.
%
Our framework \tool is capable of automatically learning predicates
for complex structures with payload constraints and mutual recursion,
facilitating applications of SL-based tools for deductive verification
and program synthesis. 
In the future, we are planning to explore other possible applications
of our predicate synthesiser for program
repair~\cite{Tonder-LeGoues:ICSE18}, program
comprehension~\cite{DBLP:conf/iwpc/BoockmannL22}, and Computer Science
education~\cite{marron2012abstracting}.

\begin{acks}
  We thank Vladimir Gladshtein, Yunjeong Lee, Peter
  Müller, Hila Peleg, George Pîrlea, and Qiyuan Zhao for their feedback on
  earlier drafts of this paper.
  We also thank the reviewers of OOPSLA'25 for their constructive and
  insightful comments.
  This work was partially supported by a Singapore Ministry of
  Education (MoE) Tier 3 grant ``Automated Program Repair''
  MOE-MOET32021-0001.
\end{acks}

\section*{Data Availability}

The implementations of \tool, \ggen, and the benchmark harness
necessary for reproducing our experimental results in
\autoref{sec:evaluation} are available online~\cite{sippy-artefact}. 

% at
% \url{https://zenodo.org/records/13918252}. An easy-to-setup version
% will be submitted for the artefact evaluation.

% Acknowledgements

% Yunjeong Lee
% George Pîrlea
% Qiyuan Zhao
% Vladimir Gladshtein
% Peter Müller
% Hila Peleg
% Vikram Goyal

% \input{limit_and_future}

% \bibliographystyle{plain}
\bibliography{references}

% \newpage

\appendix
\label{sec:appendix}

\section{Answer Set Programming in \tool and Graph \ggen.}
In this appendix, we provide a brief overview of the ASP encoding in \tool and \ggen. The notations are standard in ASP, and we will explain them in our context without describing the semantics. An interested reader can refer to the ASP textbook~\cite{lifschitz2019answer} or a tutorial~\cite{aspguide}.

\subsection{Minimisation of Redundant Predicates}
\label{app:minimisation}

The minimisation procedure in \autoref{sec:normalise} of the main text is easily encoded by
the following constraint:
\begin{minted}[fontsize=\small]{prolog}
  :- entail(H, H0), entail(H0, H), lit(H0)<lit(H), output(H).
\end{minted}
The ASP definition of the \pcode{entail(H, H0)} is provided as part of
our framework for several common first-order theories (\eg, integers,
finite sets, and lists).

\subsection{ASP-based Pruning by SL Semantics}
\label{app:slsemantics}

The listed constraints present SL semantics based pruning shown in \autoref{sec:semantics} of the main paper.
\begin{enumerate}
\item \emph{Basic Reachability}: for a pointer domain $P$ and a node $A$, if another node is reached from $A$ by a pointer in $P$, then $A$ must be the "this" node.
\begin{minted}[fontsize=\small]{prolog}
  :- pointer(P), body_lit(_, P, (A, _)), not this(A).
\end{minted}
\item \emph{Basic Assumptions}: "this" node in the basic case is either the null node or equal to another node.
\begin{minted}[fontsize=\small]{prolog}
:- this(X), not null_base(X), not eq_base(X, _).
\end{minted}
\item \emph{Restricted use of} \code{null}: if a node $A$ is a null pointer in clause $T$, then it can only appear twice in the literals of the clause.
\begin{minted}[fontsize=\small]{prolog}
:- body_lit(T, nullptr, (A,)), #count{P,Vars : var_in_literal(T,P,Vars,A)} != 2.
\end{minted}
\item \emph{Quasi-well-founded recursion of payload}: for the pure variable $B1$ in the head of the clause $T$, it should not be less than the pure variable $B2$ in the body of the clause $T$ in partial order.
\begin{minted}[fontsize=\small]{prolog}
:- pure_in_head(T, B1), pure_in_body(T, B2), not partial_le(T, 1, B2, B1).
\end{minted}
Note that by default in \tool, the variables of set and list theories follow the partial order, but not for integer theory: if the integer relation is to describe height or length, the \pcode{po_type(int)} predicate can be manually added as input to accelerate the search (458s to 130s for the balanced tree predicate). Actually we have different parameters in \tool for more practical uses.
\item \emph{Heap functionality}: there should be at most one pointer reached from a single domain $P$ in a clause $T$.
\begin{minted}[fontsize=\small]{prolog}
:- clause(T), pointer(P),
   #count{Var : var_in_body_pos(T, P, _, Var)} > 1.
\end{minted}
\end{enumerate}

\subsection{Auxiliary Placeholders}
\label{app:auxiliary}

On the implementation level for the auxiliary placeholder introduced in \autoref{sec:auxiliary} of the main text, this requires adding an ASP constraint
that forces the parameter of the placeholder predicate (\pcode{Y}
here) to appear \emph{twice} in the whole clause, so it could be later
translated into a single occurrence of a free variable as follows:

\begin{minted}[fontsize=\small]{prolog}
  :- body_lit(T, anynumber, (A,)),
     #count{P,Vars : var_in_literal(T,P,Vars,A)} != 2.
\end{minted}

\vspace{1em}
To sum up, the declarative encoding of ASP gives us a compositional way to encode the different domain knowledge, then achieves the efficient search by the advanced ASP solving.

\end{document}